\newcolumntype{C}[1]{>{\centering}m{#1}}
\DeclareMathOperator{\ini}{ini}
\DeclareMathOperator{\lt}{lt}
\DeclareMathOperator{\lv}{lv}
\DeclareMathOperator{\pquo}{pquo}
\DeclareMathOperator{\prem}{prem}
\DeclareMathOperator{\sat}{sat}
\DeclareMathOperator{\res}{res}
\newcommand{\field}[1]{\mathbb{#1}}
\newcommand{\fk}{\field{K}}
\newcommand{\pset}[1]{\mathcal{#1}}
\newcommand{\ideal}[1]{\mathfrak{#1}}
\newcommand{\bases}[1]{\langle #1 \rangle}
\newcommand{\znum}{\mathbb{Z}}
\newcommand{\qnum}{\mathbb{Q}}
\newcommand{\grobner}{Gr\"{o}bner }
\newcommand{\p}[1]{\bm{#1}}
\newcommand{\kx}{\field{K}[\p{x}]}
\DeclareMathOperator{\zero}{\mathsf{Z}}
\DeclareMathOperator{\nform}{nform}
\DeclareMathOperator{\algnor}{\sf CharDec}
\newtheorem{theorem}{Theorem}[section]
\newtheorem{proposition}[theorem]{Proposition}
\newtheorem{lemma}[theorem]{Lemma}
\newtheorem{corollary}[theorem]{Corollary}
\newtheorem{remarks}[theorem]{Remark}
\newtheorem{example}[theorem]{Example}
\newtheorem{definition}[theorem]{Definition}
\theoremstyle{nonumberplain}
\newtheorem{proof}{Proof}
\begin{document}

\title{Decomposition of polynomial sets into characteristic pairs\footnote{This work was supported partially by the National Natural Science Foundation of China (NSFC 11401018) and the project SKLSDE-2015ZX-18.}}

\date{}

\author[ab]{Dongming Wang}
\author[a]{Rina Dong}
\author[a]{Chenqi Mou\footnote{Corresponding author: +8613811426823, chenqi.mou@buaa.edu.cn, School of Mathematics and Systems Science, Beihang University, Beijing 100191, China. }}
\affil[a]{LMIB -- SKLSDE -- School of Mathematics and Systems Science, \authorcr
Beihang University, Beijing 100191, China \authorcr
\{rina.dong, chenqi.mou\}@buaa.edu.cn \vspace{4mm}}

\affil[b]{Centre National de la Recherche Scientifique, \authorcr
75794 Paris cedex 16, France \authorcr dongming.wang@lip6.fr}

\maketitle
\begin{abstract}
A characteristic pair is a pair $(\pset{G}, \pset{C})$ of polynomial sets in which $\pset{G}$ is a reduced lexicographic \grobner basis, $\pset{C}$ is the minimal triangular set contained in $\pset{G}$, and $\pset{C}$ is normal. In this paper, we show that any finite polynomial set $\pset{P}$ can be decomposed algorithmically into finitely many characteristic pairs with associated zero relations, which provide representations for the zero set of $\pset{P}$ in terms of those of \grobner bases and those of triangular sets. The algorithm we propose for the decomposition makes use of the inherent connection between Ritt characteristic sets and lexicographic \grobner bases and is based essentially on the structural properties and the computation of lexicographic \grobner bases. Several nice properties about the decomposition and the resulting characteristic pairs, in particular relationships between the \grobner basis and the triangular set in each pair, are established. Examples are given to illustrate the algorithm and some of the properties.
\end{abstract}

\noindent{\small {\bf Key words: }Characteristic pair, normal triangular set, lexicographic \grobner basis, zero decomposition\\

\noindent{\small {\bf Mathematics Subject Classification: }68W30 (primary), 13P10, 13P15 (secondary)

\section{Introduction}
\label{sec:intro}
Systems of polynomial equations are fundamental objects of study in mathematics which occur in many domains of science and engineering. Such systems may be triangularized by using the well-known method of Gaussian elimination when the equations are linear. There are two approaches, developed on the basis of characteristic sets \cite{r50d,Wu94m} and \grobner bases \cite{B85G,CLO1997I}, which can be considered as generalizations of Gaussian elimination to the case where the equations are nonlinear. Following these approaches of triangularization, the present paper is concerned with the problem of decomposing an arbitrary set $\pset{P}$ of multivariate polynomials into finitely many triangular sets of polynomials that may be used to represent the set of zeros of $\pset{P}$ (or equivalently the algebraic variety defined by $\pset{P}$, or the radical of the ideal generated by $\pset{P}$). This problem of triangular decomposition is conceptually simple, but computationally difficult, and to it satisfactory algorithmic solutions are of both theoretical interest and practical value. The last three decades have witnessed extensive research on polynomial elimination and triangular decomposition, which led to significant developments on the theories, methods, and software tools for polynomial system solving (see, e.g., \cite{a99t,AM99t,B2011a,c07c,CM2012a,CG1990R,GC1992s,h03n,k93g,l91n,P2013o,w93e,w98d,w00c,w86z,Wu01m} and references therein). Along with these developments, triangular decomposition has become a standard approach to studying computational problems in commutative algebra and algebraic geometry, a basic toolkit for building advanced functions in modern computer algebra systems, and a general and powerful technique of breaking complex polynomial systems down into simply structured, easily manageable subsystems for diverse scientific and engineering applications.

To make our statements precise, we fix an order for the variables of the polynomials in question. A \emph{triangular set} $\pset{T}$ is meant an ordered set of polynomials whose greatest variables strictly increase with respect to the fixed variable order. $\pset{T}$ is said to be \emph{normal} or called a \emph{normal set} if none of the greatest variables occurs in the leading coefficients of the polynomials in $\pset{T}$ with respect to their greatest variables. By a \emph{polynomial system} we mean a pair $[\pset{P}, \pset{Q}]$ of polynomial sets with which the system of polynomial equations $\pset{P} = 0$ and inequations $\pset{Q} \neq 0$ is of concern. It is called a \emph{triangular system} or a \emph{normal system}, respectively, if $\pset{P}$ is a triangular set or a normal set and $\pset{Q}$ satisfies certain subsidiary conditions.

Effective algorithms are now available for decomposing arbitrary polynomial sets or systems of moderate size into triangular sets or systems of various kinds (regular, simple, irreducible, etc.) \cite{W2001E,W2004e,k93g,li08a}, though it is not yet clear how to measure the quality of triangular decompositions and how to produce triangular sets or systems of high quality in terms of theoretical properties (such as uniqueness, squarefreeness, and normality) and simplicity of expression (with lower degree, smaller size, and fewer components, etc.). One way to obtain ``good" triangular decompositions is via computation of lexicographic (lex) \grobner bases, where the lex term ordering determined by the variable order ensures that the bases have certain triangular structures with nice algebraic properties \cite{B1965A,CLO1997I}. For the zero-dimensional case, relationships between \grobner bases and triangular sets were studied in \cite{L1992S}, leading to algorithms for the computation of triangular sets from lex \grobner bases based on factorization and the D5 principle \cite{D1985a}. More recently, an algorithm for triangular decomposition of zero-dimensional polynomial sets has been proposed in \cite{D2012o}, based on an exploration of the structures of lex \grobner bases. For polynomial ideals of arbitrary dimension, the connection between Ritt characteristic sets and lex \grobner bases has been investigated in \cite{W2016o}; (pseudo-) divisibility relationships established therein will be clarified and used in later sections.
 The structures of lex \grobner bases were studied first by Lazard \cite{L85i} for bivariate ideals and then extended to general zero-dimensional (radical, multivariate) ideals in a number of papers \cite{K1987s,G89p,MM2003r,D2012o} with many deep results.

One kind of presumably good triangular sets is normal sets explained above, which appeared for the first time as \emph{normalized} triangular sets in \cite{l91n} and later as \emph{p-chains} in \cite{GC1992s}, and were elaborated in \cite[Sect.\ 5.2]{W2001E} and \cite{D2001o}. Normal sets and systems enjoy a number of remarkable properties and are convenient for various applications, in particular dealing with parametric polynomial systems \cite{c07c,GC1992s,W2001E}. There are algorithms for normalizing triangular sets, and more generally, for decomposing arbitrary polynomial sets or systems into normal sets or systems \cite{W2001E,WZ2007a,li08a}.

In this paper, we focus our study on what we call \emph{characteristic pair} and \emph{characteristic decomposition}: the former is a pair $(\pset{G}, \pset{C})$ of polynomial sets in which $\pset{G}$ is a reduced lex \grobner basis, $\pset{C}$ is the minimal triangular set contained in $\pset{G}$, and $\pset{C}$ is normal; the latter is the decomposition of a finite polynomial set $\pset{P}$ into finitely many characteristic pairs with associated zero relations, which provide representations for the zero set of $\pset{P}$ in terms of those of \grobner bases and those of triangular sets. Our main contributions include: (1) clarification of the connection between normal sets and lex \grobner bases via the concept of W-characteristic sets (introduced in \cite{W2016o}), (2) introduction of the concepts of (strong) characteristic pairs and characteristic decomposition with several properties proved, (3) an algorithm for computing (strong) characteristic decompositions of polynomial sets, and (4) experimental results illustrating the performance of our algorithm and its implementation.

The proposed algorithm, which makes use of the inherent connection between characteristic sets and \grobner bases for splitting, is capable of decomposing any given polynomial set simultaneously into finitely many normal sets $\pset{C}_1, \ldots, \pset{C}_t$ and lex \grobner bases $\pset{G}_1, \ldots, \pset{G}_t$ with every $\pset{C}_i$ contained in $\pset{G}_i$. It is proved that each $\pset{C}_i$ can be reduced to a Ritt characteristic set of the ideal generated by $\pset{G}_i$ if it is not reduced (Theorem~\ref{thm:reduced} and Corollary~\ref{cor:RittDec}). It is also shown that a strong characteristic decomposition can be computed out of any characteristic decomposition without need of further splitting (Theorems~\ref{prop:equiv} and \ref{thm:zeroStrong}).

After a brief review of \grobner bases, normal sets, and W-characteristic sets in Section~\ref{sec:pre}, we will define (strong) characteristic pairs and (strong) characteristic decomposition and prove some of their properties in Section~\ref{sec:prop}, describe the decomposition algorithm with proofs of termination and correctness in Section~\ref{sec:alg}, and illustrate how the algorithm works with an example and report our experimental results in Section~\ref{sec:ex-ex}.

\section{\grobner bases, W-characteristic sets, and triangular sets}
\label{sec:pre}

We recall some basic notions and notations which will be used in later sections and highlight the intrinsic structures of reduced lex \grobner bases on which the main results of this paper are based. For more details about the theories of \grobner bases (also called Buchberger-\grobner bases) and triangular sets, the reader is referred to \cite{a99t,CLO1997I,W2001E} and references therein.

\subsection{Lexicographic \grobner bases}
\label{sec:gb}

A \grobner basis of a polynomial ideal is a special set of generators of the ideal which is well structured and has good properties. The structures and properties of \grobner bases allow one to solve various computational problems with polynomial ideals, such as basic ideal operation, ideal membership test, and primary ideal decomposition. Introduced by Buchberger \cite{B1965A} in 1965 and having been developed for over half a century \cite{B85G,GTZ88g,w92c,FGLM93E,SY1996l,F1999A,F2002A,GVW2010,KSW2010a}, \grobner bases have become a truly powerful method that has applications everywhere polynomial ideals are involved.

Let $\fk$ be any field and $\fk[x_1, \ldots, x_n]$ be the ring of polynomials in the variables $x_1, \ldots, x_n$ with coefficients in $\fk$. In the sequel, we fix the variable order as $x_1 < \cdots < x_n$ unless otherwise specified. For the sake of simplicity, we write $\p{x}$ for $(x_1, \ldots, x_n)$, $\p{x}_i$ for $(x_1, \ldots, x_i)$, and $\kx$ for $\fk[x_1, \ldots, x_n]$.

A total ordering $<$ on all the terms in $\kx$ is called a \emph{term ordering} if it is a well ordering and for any terms $\p{u}, \p{v}$, and $\p{w}$ in $\kx$, $\p{u} > \p{v}$ implies $\p{u}\p{w} >
\p{v}\p{w}$. In this paper we are concerned mainly with the lex term ordering, with respect to which \grobner bases possess rich algebraic structures. For any two terms $\p{u} = \p{x}^{\p{\alpha}}$ and $\p{v} = \p{x}^{\p{\beta}}$ in $\kx$, we say that $\p{u} >_{\rm lex} \p{v}$ if the left rightmost nonzero entry in the vector $\p{\alpha} - \p{\beta}$ is positive.

Fix a term ordering $<$. The greatest term in a polynomial $F\in \kx$ with respect to $<$ is called the \emph{leading term} of $F$ and denoted by $\lt(F)$. As usual, $\bases{\{F_1, \ldots, F_s,\ldots\}}=\bases{F_1, \ldots, F_s,\ldots}$ denotes the ideal generated by the polynomials $F_1,\ldots,F_s,\ldots\in \kx$.

\begin{definition}
Let
$\mathfrak{I}\subseteq \kx$ be an ideal and $<$ be a term ordering. A finite
set $\{G_1, \ldots, G_s\}$ of polynomials in $\mathfrak{I}$ is called a \emph{\grobner
basis} of $\mathfrak{I}$ with respect to $<$ if $\bases{\lt(G_1), \ldots, \lt(G_s)} = \bases{\lt(\mathfrak{I})}$,
where $\lt(\mathfrak{I})$ denotes the set of leading terms of all the
polynomials in $\mathfrak{I}$.
\end{definition}

Let $\pset{G} = \{G_1, \ldots, G_s\}$ be a \grobner basis of an ideal $\mathfrak{I} \subseteq \kx$ with respect to a fixed term ordering $<$. For any polynomial $F\in \kx$, there is a unique polynomial $R\in \kx$ corresponding to $F$ such that $F-R \in \mathfrak{I}$ and no term of $R$ is divisible by any of $\lt(G_1), \ldots, \lt(G_s)$. The polynomial $R$ is called the \emph{normal form} of $F$ with respect to $\pset{G}$ (denoted by $\nform(F, \pset{G})$), and $F$ is said to be \emph{B-reduced} with respect to $\pset{G}$ if $F = R$. The \grobner basis $\pset{G}$ itself is said to be \emph{reduced} if the coefficient of each $G_i$ in $\lt(G_i)$ is $1$ and no term of $\pset{G}_i$ lies in $\bases{\{\lt(G)|\, G\in \pset{G}, G \neq G_i\}}$ for all $i=1, \ldots, s$. The reduced \grobner basis of $\mathfrak{I}$ with respect to a fixed term ordering is unique.

\begin{example}\label{ex:gb}
 Consider $\pset{P}=\{x_1x_2, x_2x_3+x_1^2x_2, x_3^2, x_1x_4^2, (x_1^2x_3+1)x_4+x_3x_2x_1\}\subseteq \fk[x_1, x_2, x_3, x_4]$. The polynomial set $\pset{G} = \{x_1x_2,x_2x_3,x_3^2,x_4\}$ is a \grobner basis of the ideal $\bases{\pset{P}}$ with respect to the lex ordering on $x_1<x_2<x_3<x_4$. One can easily check that $\pset{G}$ is also the reduced lex \grobner basis of $\pset{P}$.
\end{example}

\begin{remarks}
   The term \emph{B-reduced} is an abbreviation of \emph{Buchberger-reduced} for a polynomial modulo a \grobner basis. We use the prefix B to distinguish this term from the term \emph{R-reduced} (short for Ritt-reduced, defined below) for a polynomial modulo a triangular set.
\end{remarks}

\subsection{Normal triangular sets}
\label{sec:tri}

Now let $F$ be a polynomial in $\kx\setminus\fk$. With respect to the variable order, the greatest variable which actually appears in $F$ is called the \emph{leading variable} of $F$ and denoted by $\lv(F)$. Let $\lv(F) = x_i$; then $F$ can be written as $F = Ix_i^k + R$, with $I\in \fk[\p{x}_{i-1}]$, $R
\in \fk[\p{x}_i]$, and $\deg(R, x_i) < k=\deg(F, x_i)$. The polynomial
$I$ is called the \emph{initial} of $F$, denoted by $\ini(F)$. For any polynomial set $\pset{F}\subseteq \kx$, $\ini(\pset{F})$ denotes $\{\ini(F)\mid\, F\in \pset{F}\}$.

\begin{definition}
Any finite, nonempty, ordered set $[T_1, \ldots, T_r]$ of polynomials in $\kx\setminus\fk$ is called a \emph{triangular set} if $\lv(T_1) < \cdots < \lv(T_r)$ with respect to the variable order.
\end{definition}

The \emph{saturated ideal} of a triangular set $\pset{T} = [T_1, \ldots, T_r]$ is defined as $\sat(\pset{T}) = \bases{\pset{T}}: J^{\infty}$, where $J=\ini(T_1)\cdots \ini(T_r)$. We write $\sat_i(\pset{T}) = \sat([T_1,\ldots, T_i])$ for $i=1, \ldots, r$. The variables in $\{x_1, \ldots, x_n\} \setminus \{\lv(T_1), \ldots, \lv(T_r)\}$ are called the \emph{parameters} of $\pset{T}$. A triangular set $\pset{T}$ is said to be \emph{zero-dimensional} if there is no parameter of $\pset{T}$, and \emph{positive-dimensional} otherwise.

\begin{definition}
   A triangular set $\pset{T}=[T_1, \ldots, T_r] \subseteq \kx$ is said to be \emph{normal} (or called a \emph{normal set})
if all $\ini(T_1), \ldots, \ini(T_r)$ involve only the parameters of $\pset{T}$.
\end{definition}

\begin{example}\label{ex:tri}
 From the \grobner basis $\pset{G} = \{x_1x_2, x_2x_3, x_3^2, x_4\}$ in Example \ref{ex:gb} one can extract two triangular sets $\pset{T}_1=[x_1x_2,x_2x_3,x_4]$ and $\pset{T}_2=[x_1x_2,x_3^2,x_4]$. Both of them are positive-dimensional, with $x_1$ as their parameter. One can easily see that $\pset{T}_1$ is not normal, but $\pset{T}_2$ is.
\end{example}

 Among the most commonly used triangular sets there are \emph{regular sets} \cite{w00c} or \emph{regular chains} \cite{a99t}. A triangular set $\pset{T}=[T_1, \ldots, T_r] \subseteq \kx$ is called a regular set or said to be regular if for every $i=2, \ldots, r$, $\ini(T_i)$ is neither zero nor a zero-divisor in $\kx/\sat_{i-1}(\pset{T})$. By definition any normal set is obviously regular. It is proved in \cite{a99t,W2001E} that a triangular set $\pset{T}$ is regular if and only if $\sat(\pset{T}) = \{F|\prem(F, \pset{T}) = 0\}$.

A nonzero polynomial $P\in \kx$ is said to be \emph{R-reduced} with respect to $Q \in \kx\setminus\fk$ if $\deg(P, \lv(Q)) < \deg(Q, \lv(Q))$; $P$ is \emph{R-reduced} with respect to a triangular set $\pset{T}=[T_1, \ldots, T_r] \subseteq \kx$ if $P$ is R-reduced with respect to all $T_i$ for $i=1, \ldots, r$. A triangular set $\pset{T}$ itself is said to be \emph{R-reduced} if $T_i$ is R-reduced with respect to $[T_1, \ldots, T_{i-1}]$ for all $i=2, \ldots, r$.

Denote by $\prem(P, Q)$ the \emph{pseudo-remainder} and by $\pquo(P, Q)$ the \emph{pseudo-quotient} of $P\in \kx$ with respect to $Q\in \kx\setminus\fk$ in $\lv(Q)$, and for any triangular set $\pset{T}=[T_1, \ldots, T_r] \subseteq \kx$ define
$$\prem(P, \pset{T}) = \prem(\cdots\prem(\prem(P, T_r), T_{r-1}),\ldots, T_1),$$
 called the \emph{pseudo-remainder} of $P$ with respect to $\pset{T}$. Clearly, $\prem(P, Q)$ and $\prem(P, \pset{T})$ are respectively R-reduced with respect to $Q$ and $\pset{T}$. Similarly, we can define the \emph{resultant} of $P$ with respect to $\pset{T}$ as
  $$\res(P, \pset{T}) = \res(\cdots\res(\res(P, T_r), T_{r-1}),\ldots, T_1),$$
where $\res(P, Q)$ denotes the \emph{resultant} of $P\in \kx$ and $Q\in \kx\setminus\fk$ with respect to $\lv(Q)$.

\subsection{W-characteristic sets}

 From the reduced lex \grobner basis $\pset{G}$ of any polynomial ideal $\bases{\pset{P}}\subseteq \kx$, one can extract the W-characteristic set $\pset{C}$ of $\bases{\pset{P}}$ defined below.
\begin{definition}[{{\cite[Def.~3.1]{W2016o}}}]\label{def:wchar}
   Let $\pset{G}$ be the reduced lex \grobner basis of an ideal generated by an arbitrary polynomial set $\pset{P}\subseteq \kx$, and denote by $\pset{G}^{(i)} = \{G\in \pset{G} |\, \lv(G) = x_i\}$. Then the set
$$\bigcup_{i=1}^{n}\{G \in \pset{G}^{(i)}|\, \forall G'\in \pset{G}\setminus \{G\}, G <_{\rm lex} G' \},$$
ordered according to $<_{\rm lex}$, is called the \emph{W-characteristic set} of $\bases{\pset{P}}$.
\end{definition}

The set in Definition \ref{def:wchar} is also called the W-characteristic set of the reduced lex \grobner basis $\pset{G}$ for the sake of simplicity. By definition any W-characteristic set is a triangular set.

\begin{example}\label{ex:w-char}
Clearly among the two triangular sets $\pset{T}_1$ and $\pset{T}_2$ extracted from the reduced lex \grobner basis $\pset{G}$ in Example~\ref{ex:tri}, $\pset{T}_1$ is the W-characteristic set of $\bases{\pset{P}}$, for $x_2x_3 <_{\rm lex} x_3^2$.
\end{example}

\begin{definition}
  Let $\pset{P}$ be any finite polynomial set in $\kx$. An R-reduced triangular set $\pset{C} \subseteq \kx$ is called a \emph{Ritt characteristic set} of the ideal $\bases{\pset{P}}$ if $\pset{C} \subseteq \bases{\pset{P}}$ and for any $P \in \bases{\pset{P}}$, $\prem(P, \pset{C}) = 0$.
\end{definition}

The W-characteristic set $\pset{C}$ of $\bases{\pset{P}}$ is a Ritt characteristic set of $\bases{\pset{P}}$ if $\pset{C}$ is R-reduced \cite[Thm.\,3.3]{W2016o}. Further relationships between Ritt characteristic sets and lex \grobner bases are established in \cite{W2016o} with the help of the
concept of W-characteristic sets.

For any polynomial $P \in \kx$ and polynomial set $\pset{P} \subseteq \kx$, we denote by $\zero(P)$ the set of zeros of $P$ in $\bar{\fk}$, the algebraic closure of $\fk$, and by $\zero(\pset{P})$ the set of common zeros of all the polynomials in $\pset{P}$ in $\bar{\fk}^n$. For any nonempty polynomial sets $\pset{P}$ and $\pset{Q}$ in $\kx$, we define $\zero(\pset{P} / \pset{Q}) = \zero(\pset{P}) \setminus \zero(\Pi_{Q\in \pset{Q}}Q)$.

\begin{proposition}[{{\cite[Prop.\ 3.1]{W2016o}}}]\label{prop:zero}
 Let $\pset{C}$ be the W-characteristic set of $\bases{\pset{P}} \subseteq \kx$. Then
\begin{enumerate}
\item[$(a)$] for any $P \in \bases{\pset{P}}$, $\prem(P, \pset{C}) = 0$;
\item[$(b)$] $\bases{\pset{C}} \subseteq \bases{\pset{P}} \subseteq \sat(\pset{C})$;
\item[$(c)$] $\zero(\pset{C} / \ini(\pset{C})) \subseteq \zero(\pset{P}) \subseteq \zero(\pset{C})$.
\end{enumerate}
\end{proposition}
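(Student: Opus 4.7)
The plan is to prove part (a) first, and then derive parts (b) and (c) from it together with the simple containment $\pset{C}\subseteq\pset{G}\subseteq\bases{\pset{P}}$. Throughout I write $\pset{C}=[C_{i_1},\ldots,C_{i_r}]$ with $\lv(C_{i_k})=x_{i_k}$ and set $J=\prod_k\ini(C_{i_k})$.

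For part (a), I would argue by contradiction: assume $P'=\prem(P,\pset{C})\neq 0$. Each step of the pseudo-remainder procedure replaces the current polynomial $H$ by $\ini(C_{i_k})^{a}H-QC_{i_k}$ for some exponent $a$ and some $Q\in\kx$; since $P\in\bases{\pset{P}}=\bases{\pset{G}}$ and every $C_{i_k}\in\pset{G}$, we have $P'\in\bases{\pset{G}}$. The Gröbner-basis property then forces $\lt(P')\in\bases{\lt(\pset{G})}$, so $\lt(G)\mid \lt(P')$ for some $G\in\pset{G}$. Let $j=\lv(G)$; then $\pset{G}^{(j)}\neq\emptyset$, so $\pset{C}$ contains some $C_{i_k}$ with $x_{i_k}=x_j$, which I call $C_j$. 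The key numerical fact to extract from Definition~\ref{def:wchar} is that, among monomials whose leading variable is $x_j$, lex-minimality (with $x_1<\cdots<x_n$) is broken first by the $x_j$-exponent, so the lex-smallest element $C_j$ of $\pset{G}^{(j)}$ satisfies $\deg(C_j,x_j)=\deg(\lt(C_j),x_j)\le \deg(\lt(G),x_j)$. Combining with $\lt(G)\mid \lt(P')$ gives $\deg(P',x_j)\ge \deg(\lt(G),x_j)\ge \deg(C_j,x_j)$, contradicting the R-reducedness of $P'$ with respect to $C_j$ that is built into the definition of $\prem(P,\pset{C})$.

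For part (b), the inclusion $\bases{\pset{C}}\subseteq\bases{\pset{P}}$ is immediate from $\pset{C}\subseteq\pset{G}\subseteq\bases{\pset{P}}$. For $\bases{\pset{P}}\subseteq\sat(\pset{C})$, I would expand the pseudo-remainder identity coming from part (a): there exist an exponent $e\ge 0$ and polynomials $Q_1,\ldots,Q_r\in\kx$ with $J^{e}P=\sum_{k}Q_kC_{i_k}\in\bases{\pset{C}}$, so $P\in \bases{\pset{C}}:J^{\infty}=\sat(\pset{C})$. For part (c), the inclusion $\zero(\pset{P})\subseteq\zero(\pset{C})$ follows from $\bases{\pset{C}}\subseteq\bases{\pset{P}}$: every common zero of $\pset{P}$ vanishes on each $C_{i_k}$. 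Conversely, for $\xi\in\zero(\pset{C}/\ini(\pset{C}))$ and any $P\in\pset{P}$, part (b) gives $J^{e}P\in\bases{\pset{C}}$, hence $J(\xi)^{e}P(\xi)=0$; since $\ini(C_{i_k})(\xi)\ne 0$ for every $k$ we have $J(\xi)\ne 0$, so $P(\xi)=0$ and $\xi\in\zero(\pset{P})$.

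The main obstacle, and really the only nontrivial step, is the minimality argument inside part (a): converting the lex-minimality of $C_j$ within $\pset{G}^{(j)}$ into the concrete inequality $\deg(C_j,x_j)\le \deg(\lt(G),x_j)$ for every $G\in\pset{G}^{(j)}$. This is where the tie-breaking structure of the lex ordering on monomials sharing the same leading variable is essential, and it is exactly this numerical fact that collides with the R-reducedness of $P'$ to yield $P'=0$. Once this is in hand, parts (b) and (c) are merely bookkeeping using the pseudo-remainder identity and evaluation at a point outside $\zero(\ini(\pset{C}))$.
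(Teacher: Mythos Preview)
The paper does not give its own proof of this proposition: it is stated with a citation to \cite[Prop.\ 3.1]{W2016o} and no proof environment follows. So there is no in-paper argument to compare against.

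Your proof is correct and follows the standard route. A couple of small points worth tightening: (i) You implicitly use that $\prem(P,\pset{C})$ is R-reduced with respect to \emph{every} $C_{i_k}$, not just the last one processed; this is true because pseudo-dividing by $C_{i_{k-1}}$ (whose leading variable is strictly smaller) cannot raise the degree in $\lv(C_{i_k})$, but it deserves a word. (ii) The pseudo-remainder identity actually yields $I_{i_1}^{q_1}\cdots I_{i_r}^{q_r}P\in\bases{\pset{C}}$ with possibly different exponents $q_k$, not literally $J^eP$; one then passes to a common exponent by multiplying through, which is harmless since $\sat(\pset{C})=\bases{\pset{C}}:J^\infty$. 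The key step you identify—turning the lex-minimality of $C_j$ in $\pset{G}^{(j)}$ into the inequality $\deg(C_j,x_j)\le\deg(G,x_j)$—is exactly right and is the heart of the matter.
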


The following theorem \cite[Thm.\ 3.9]{W2016o} exploits the pseudo-divisibility relationships between polynomials in W-characteristic sets (and thus between those in reduced lex \grobner bases) for polynomial ideals of arbitrary dimension, while other well-known structural properties of lex \grobner bases were established only for bivariate or zero-dimensional polynomial ideals. It is these relationships that enable us to adopt an effective splitting strategy for our algorithm of characteristic decomposition.

\begin{theorem}[{{\cite[Thm.\ 3.9]{W2016o}}}]\label{thm:divisible}
Let $\pset{C} = [C_1, \ldots, C_r]$ be the W-characteristic set of $\bases{\pset{P}}\subseteq \kx$. If $\pset{C}$ is not normal, then there exists an integer $k~(1\leq k \leq r)$ such that $[C_1, \ldots, C_k]$ is normal and $[C_1, \ldots, C_{k+1}]$ is not regular.

 Assume that the variables $x_1, \ldots, x_n$ are ordered such that the parameters of $\pset{C}$ are all smaller than the other variables and let $I_{k+1} = \ini(C_{k+1})$ and $l$ be the integer such that $\lv(I_{k+1})=\lv(C_l)$.
\begin{enumerate}
\item[$(a)$]\label{item:main-a} If $I_{k+1}$ is not R-reduced with respect to $C_l$, then
  \begin{equation*}
    \label{eq:thm}
    \begin{split}
      & \prem(I_{k+1}, [C_1, \ldots, C_l]) = 0, \\
      & \prem(C_{k+1}, [C_1, \ldots, C_k]) = 0.
    \end{split}
 \end{equation*}
\item[$(b)$] If $I_{k+1}$ is R-reduced with respect to $C_l$, then
$$\prem(C_l, [C_1, \ldots, C_{l-1}, I_{k+1}])=0$$
 and either
$\res(\ini(I_{k+1}), [C_1, \ldots, C_{l-1}])=0$ or
$$\prem(C_{k+1}, [C_1, \ldots, C_{l-1}, I_{k+1}, C_{l+1}, \ldots, C_k])=0.$$
\end{enumerate}
\end{theorem}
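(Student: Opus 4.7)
The plan is to prove the three assertions in sequence: existence of the index $k$, then conclusion (a), then conclusion (b). For existence, let $k$ be the largest index such that $[C_1,\ldots,C_k]$ is normal. Then $k\ge 1$ because $\lv(C_1)$ is the smallest leading variable, forcing $\ini(C_1)$ to involve only variables strictly below it and hence only parameters under the assumed ordering; and $k<r$ because $\pset{C}$ itself is not normal by hypothesis. By maximality, $\ini(C_{k+1})$ involves at least one non-parameter, and under the stated ordering this non-parameter must equal some $\lv(C_l)$ with $l\le k$, which defines $l$. Non-regularity of $[C_1,\ldots,C_{k+1}]$ then falls out of either conclusion (a) or (b): each forces $I_{k+1}$ to pseudo-reduce to zero modulo an initial segment of the normal (hence regular) set $[C_1,\ldots,C_k]$, and thus to lie in its saturated ideal, contradicting regularity at $C_{k+1}$.

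For (a), the strategy leverages the reducedness of the lex \grobner basis $\pset{G}\supseteq\pset{C}$ alongside the hypothesis $\deg(I_{k+1},\lv(C_l))\ge\deg(C_l,\lv(C_l))$. Writing $C_{k+1}=I_{k+1}\,\lv(C_{k+1})^d+R_{k+1}$, each term of $I_{k+1}$ appears, multiplied by $\lv(C_{k+1})^d$, as a term of $C_{k+1}$; by reducedness of $\pset{G}$ no such term can be divisible by $\lt(C_l)$. Combined with the degree hypothesis, this combinatorial constraint forces a rigid shape on $I_{k+1}$ that I would exploit to run the pseudo-division of $I_{k+1}$ by $C_l, C_{l-1}, \ldots, C_1$ step by step, witnessing that every intermediate remainder still sits in $\bases{\pset{G}}$ with terms not divisible by any leading term of $\pset{G}$, so the reduction cascades down to zero. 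Once $\prem(I_{k+1},[C_1,\ldots,C_l])=0$ is in hand, I would obtain $\prem(C_{k+1},[C_1,\ldots,C_k])=0$ by induction on the degree in $\lv(C_{k+1})$: the top coefficient $I_{k+1}$ vanishes by the previous step, and the same argument applies to the leading coefficient of each successive partial remainder.

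For (b), since $I_{k+1}$ is R-reduced with respect to $C_l$ but shares its leading variable, we have $\lt(I_{k+1})<_{\rm lex}\lt(C_l)$. The key observation is that, by the definition of the W-characteristic set, $C_l$ was selected as the $<_{\rm lex}$-smallest element of $\pset{G}$ with leading variable $\lv(C_l)$; so if $I_{k+1}$ were in $\bases{\pset{P}}$ it would force a replacement of $C_l$. I would establish $\prem(C_l,[C_1,\ldots,C_{l-1},I_{k+1}])=0$ by first pseudo-reducing $C_l$ by $I_{k+1}$ in the variable $\lv(C_l)$ and then by the normal prefix $[C_1,\ldots,C_{l-1}]$, checking throughout that intermediate remainders remain in $\bases{\pset{G}}$ so that \grobner-basis reducedness continues to shrink them. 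The final dichotomy --- either $\res(\ini(I_{k+1}),[C_1,\ldots,C_{l-1}])=0$ or the longer pseudo-reduction of $C_{k+1}$ vanishes --- corresponds to whether the leading coefficient that would be used to multiply $I_{k+1}$ during the pseudo-reduction of $C_{k+1}$ itself vanishes modulo $[C_1,\ldots,C_{l-1}]$; this branching has to be tracked carefully as a separate case.

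The main obstacle is (a): reconciling \grobner-basis reducedness, which is a divisibility condition on whole monomials, with pseudo-division, which monitors degrees in a single variable, requires delicate bookkeeping. The linchpin is to show that every intermediate pseudo-remainder stays inside $\bases{\pset{G}}$ so that reducedness can be re-applied at each step, and to ensure that pseudo-reduction by the normal prefix $[C_1,\ldots,C_{l-1}]$ never inadvertently leaves the ideal. This induction, together with the shape constraint on $I_{k+1}$ forced by reducedness, is the main technical hurdle in the proof.
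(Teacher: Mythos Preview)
The paper does not prove this theorem: it is quoted verbatim from \cite[Thm.~3.9]{W2016o} and accompanied only by an illustrative example (Example~\ref{ex:wCharThm}), not a proof. There is therefore no argument in the present paper against which your proposal can be compared; for the actual proof you would have to consult the cited source.

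On the merits of your sketch itself, the overall architecture---choose $k$ maximal with $[C_1,\ldots,C_k]$ normal, locate $l$ via $\lv(I_{k+1})$, then exploit the reducedness of $\pset{G}$ to force pseudo-remainders to vanish---is the natural line of attack and matches what one expects from the original source. Two points deserve care. First, your derivation of non-regularity of $[C_1,\ldots,C_{k+1}]$ from (a) or (b) is clean in case~(a) (since $\prem(I_{k+1},[C_1,\ldots,C_l])=0$ puts $I_{k+1}$ in $\sat_k(\pset{C})$), but in case~(b) you only obtain $\prem(C_l,[C_1,\ldots,C_{l-1},I_{k+1}])=0$, and turning that into ``$I_{k+1}$ is a zero-divisor modulo $\sat_k(\pset{C})$'' needs an extra step you have not spelled out. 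Second, in~(a) your claim that reducedness of $\pset{G}$ prevents any term of $I_{k+1}\cdot\lv(C_{k+1})^d$ from being divisible by $\lt(C_l)$ is correct, but the inference from this to $\prem(I_{k+1},[C_1,\ldots,C_l])=0$ is the crux of the whole theorem and is where the real work lies; as you acknowledge, the bookkeeping that keeps intermediate remainders inside $\bases{\pset{G}}$ while simultaneously B-reduced is delicate, and your outline does not yet supply the mechanism that makes it go through.
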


\begin{example}\label{ex:wCharThm}
The W-characteristic set $\pset{T}_1=[x_1x_2,x_2x_3,x_4]$ of $\bases{\pset{P}}$ in Example~\ref{ex:w-char} is not normal, and one can find that $[x_1x_2]$ is normal, but $[x_1x_2,x_2x_3]$ is not (furthermore, it is not regular). The initial $x_2$ of $x_2x_3$ is not R-reduced with respect to $x_1x_2$, and one can check that $\prem(x_2,[x_1x_2])=0$ and $\prem(x_2x_3,[x_1x_2])=0$, which accord with Theorem~\ref{thm:divisible}(a).
\end{example}

An obvious consequence of the first part of the theorem is that the W-characteristic set contained in the reduced lex \grobner basis of a polynomial ideal, if it is regular, must be normal. This implies that certain normalization mechanism is integrated into the algorithm of \grobner bases, so that triangular subsets of lex \grobner bases are normalized as much as possible.

The condition on the order of $x_1, \ldots, x_n$ for $\pset{C}$ in Theorem~\ref{thm:divisible} is needed, for otherwise the theorem does not necessarily hold, as shown by \cite[Ex.~3.1(b)]{W2016o}. In the latter case, one may change the variable order properly to make the condition satisfied, so as to obtain the pseudo-divisibility relations in Theorem~\ref{thm:divisible}. For polynomial ideals of dimension $0$, their W-characteristic sets do not involve any parameters and thus the condition is satisfied naturally. For the rest of the paper, we assume that the condition is also satisfied for the positive-dimensional case where the structures of lex \grobner bases are rather complicated.

\section{Decomposition into characteristic pairs}
\label{sec:prop}
In this section we discuss the decomposition of an arbitrary polynomial set into (strong) characteristic pairs with associated zero relations and prove some properties about the decomposition. A decomposition algorithm will be presented in Section \ref{sec:alg}.

\subsection{Characteristic pairs and strong characteristic pairs}
\label{sec:prop-normal}

\begin{definition}
A pair $(\pset{G}, \pset{C})$ with $\pset{G}, \pset{C} \subseteq \kx$ is called a \emph{characteristic pair} in $\kx$ if $\pset{G}$ is a reduced lex \grobner basis, $\pset{C}$ is the W-characteristic set of $\bases{\pset{G}}$, and $\pset{C}$ is normal.
\end{definition}

The following known results (Propositions \ref{prop:normal-zero}--\ref{prop:normal-proj}) concerning normal sets are recalled, with references or self-contained proofs, exhibiting some of the nice properties of characteristic pairs.

\begin{proposition}\label{prop:normal-zero}
  For any zero-dimensional normal set $\pset{N} \subseteq \kx$: $(a)$ $\sat(\pset{N}) = \bases{\pset{N}}$; $(b)$ $\pset{N}$ is the lex \grobner basis of $\bases{\pset{N}}$.
\end{proposition}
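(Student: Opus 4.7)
The plan is to show that the zero-dimensional normality hypothesis forces every initial of $\pset{N}$ to be a nonzero constant, from which both (a) and (b) follow almost immediately.

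First I would write $\pset{N} = [N_1, \ldots, N_n]$ with $\lv(N_i) = x_i$: since $\pset{N}$ is zero-dimensional, the set of its parameters $\{x_1, \ldots, x_n\} \setminus \{\lv(N_1), \ldots, \lv(N_r)\}$ is empty, so $r = n$ and $\lv(N_i) = x_i$ for every $i$. The normality hypothesis says that each $\ini(N_i)$ involves only parameters of $\pset{N}$, so in the zero-dimensional case every $\ini(N_i)$ lies in $\fk \setminus \{0\}$. Consequently $J := \ini(N_1) \cdots \ini(N_n)$ is a nonzero constant, hence a unit in $\kx$; this already yields (a), since $\sat(\pset{N}) = \bases{\pset{N}} : J^\infty = \bases{\pset{N}}$.

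For (b), set $d_i = \deg(N_i, x_i)$. Because $\ini(N_i) \in \fk \setminus \{0\}$ and $x_i$ is the largest variable occurring in $N_i$, the leading term of $N_i$ with respect to the lex ordering induced by $x_1 < \cdots < x_n$ is $\lt(N_i) = \ini(N_i)\, x_i^{d_i}$, a scalar multiple of a pure power of $x_i$. For $i \neq j$ these leading terms are pure powers of distinct variables and are therefore coprime. I would then invoke Buchberger's product criterion -- the S-polynomial of two polynomials with coprime leading terms B-reduces to zero modulo the pair -- to conclude that every S-polynomial $S(N_i, N_j)$ reduces to $0$ modulo $\pset{N}$; Buchberger's S-polynomial criterion then gives that $\pset{N}$ is a lex \grobner basis of $\bases{\pset{N}}$.

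The main obstacle is essentially just notational: once one unwinds ``zero-dimensional'' and ``normal'', the initials are units and both statements are one-line consequences of standard facts (unit saturation on the one hand, and the coprime-leading-term criterion on the other). Neither the pseudo-remainder machinery nor the deeper structure theory of lex \grobner bases recalled in Section \ref{sec:pre} is needed.
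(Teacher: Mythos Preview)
Your proof is correct and follows essentially the same approach as the paper: the paper also observes that zero-dimensional normality forces each $\ini(N)$ to lie in $\fk$, deduces (a) directly from the definition of $\sat(\pset{N})$, and for (b) cites \cite[Sect.~2.9, Thm.~3 and Prop.~4]{CLO1997I}, which are precisely Buchberger's S-polynomial criterion and the product (coprime leading terms) criterion that you invoke. Your write-up simply makes these references explicit.
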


\begin{proof}
  As $\pset{N}$ is a zero-dimensional normal set, each $\ini(N)$ is a constant in $\fk$ for $N\in \pset{N}$. Then statement (a) follows directly from the definition of $\sat(\pset{N})$, and statement (b) can be derived easily by using \cite[Section~2.9,  Thm.~3 and Prop.~4]{CLO1997I}.
\end{proof}

\begin{proposition}[{{\cite[Prop.~2.2]{M2012d}}}]\label{prop:normal-pos}
Let $\pset{N}\subseteq \kx$ be any positive-dimensional normal set with parameters $\tilde{\p{x}} \subseteq \p{x}$. Then $\pset{N}$ is the lex \grobner basis of $\bases{\pset{N}}$ over $\fk(\tilde{\p{x}})$.
\end{proposition}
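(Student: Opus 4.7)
The plan is to reduce this to the zero-dimensional case already handled in Proposition~\ref{prop:normal-zero}(b) by enlarging the coefficient field. Let $\tilde{\p{x}} \subseteq \p{x}$ be the set of parameters of $\pset{N}$ and let $\p{y} := \p{x} \setminus \tilde{\p{x}}$, which is exactly $\{\lv(N) : N \in \pset{N}\}$. View each $N \in \pset{N}$ as an element of the polynomial ring $R := \fk(\tilde{\p{x}})[\p{y}]$, and equip $R$ with the lex term order induced by restricting the ambient lex order on $\p{x}$ to monomials in $\p{y}$.

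First I would check that $\pset{N}$, regarded as a subset of $R$, is a zero-dimensional normal set. The triangular structure is preserved because the leading variables of the polynomials in $\pset{N}$ in $R$ are still $\lv(N)$ (the entries of $\tilde{\p{x}}$ now belong to the coefficient field and cannot be leading variables), and these strictly increase along $\pset{N}$ with respect to the induced order. By normality of $\pset{N}$ in $\kx$, each $\ini(N)$ involves only parameters in $\tilde{\p{x}}$, so $\ini(N) \in \fk[\tilde{\p{x}}] \setminus \{0\} \subseteq R^{\times}$. Hence in $R$ every initial is a nonzero element of the coefficient field, and $\pset{N}$ has no parameters left over $\fk(\tilde{\p{x}})$; this is exactly the definition of a zero-dimensional normal set in $R$.

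Next I would invoke Proposition~\ref{prop:normal-zero}(b) applied in $R$: the set $\pset{N}$ is the lex \grobner basis of the ideal it generates in $R$, which is precisely the meaning of ``$\pset{N}$ is the lex \grobner basis of $\bases{\pset{N}}$ over $\fk(\tilde{\p{x}})$.'' This closes the proof.

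The only delicate point, and the one I expect to spell out most carefully, is the compatibility of the two orderings: one has to observe that $\lt(N)$ computed in $\kx$ under $<_{\rm lex}$ and $\lt(N)$ computed in $R$ under the induced lex order coincide as monomials in $\p{y}$, because the parameters $\tilde{\p{x}}$ are smaller than every variable in $\p{y}$ under our standing convention (cf.\ the hypothesis preceding Theorem~\ref{thm:divisible}), so the leading monomial of $N$ in $\p{x}$ is a pure monomial in the leading variable of $N$, whose coefficient is $\ini(N)$; this is the same leading monomial one gets when passing to $R$. Once this identification is in place, the \grobner basis property transports between $\kx$ and $R$ without further work.
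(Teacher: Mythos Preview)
The paper does not supply a proof of this proposition; it simply cites it from another source. Your argument---pass to $R=\fk(\tilde{\p{x}})[\p{y}]$, observe that normality forces each $\ini(N)\in\fk[\tilde{\p{x}}]\setminus\{0\}\subseteq R^{\times}$ so that $\pset{N}$ becomes a zero-dimensional normal set over the enlarged ground field, and then invoke Proposition~\ref{prop:normal-zero}(b)---is correct and is the standard route to this fact.

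Your final paragraph, however, is both unnecessary and slightly off. The statement to be proved lives entirely in $R$: you are asked whether $\pset{N}$ is a lex \grobner basis of the ideal it generates in $\fk(\tilde{\p{x}})[\p{y}]$, so no comparison with leading terms computed in $\kx$ is needed and there is nothing to ``transport.'' Moreover, the claim that ``the leading monomial of $N$ in $\p{x}$ is a pure monomial in the leading variable of $N$'' is false as written: if $\ini(N)$ is a nonconstant polynomial in the parameters, then $\lt(N)$ in $\kx$ equals $\lt(\ini(N))\cdot\lv(N)^d$, not a pure power of $\lv(N)$. Also, the paper's standing assumption that parameters are ordered below the other variables is stated for W-characteristic sets, not for arbitrary normal sets as here. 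What you actually need, and what you already established, is that in $R$ the leading term of each $N$ is $\lv(N)^{\deg(N,\lv(N))}$ up to a unit; these are pairwise coprime, which is precisely what makes Proposition~\ref{prop:normal-zero}(b) (or Buchberger's first criterion directly) apply. Dropping the last paragraph would tighten your proof without losing anything.
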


Let $\pset{T} = [T_1, \ldots, T_r] \subseteq \kx$ be an arbitrary triangular set with parameters $x_1, \ldots, x_d~(d+r=n)$. For each $i=0, \ldots, r-1$, write
\begin{equation*}
\pset{T}_{\leq i} = \pset{T}\cap \fk[x_1, \ldots, x_{d+i}] = [T_1, \ldots, T_i], \quad
\pset{I}_{\leq i} = \ini(\pset{T}) \cap \fk[x_1, \ldots, x_{d+i}].
\end{equation*}
$\pset{T}$ is said to have the \emph{projection property} if for any $i=0$, $\ldots, r-1$ and any $\bar{\p{x}}_i \in \zero(\pset{T}_{\leq i}/ \pset{I}_{\leq i})$, there exist $\bar{x}_{i+1}, \ldots, \bar{x}_{r} \in \bar{\fk}$ such that $(\bar{\p{x}}_i, \bar{x}_{i+1}, \ldots, \bar{x}_{r}) \in \zero(\pset{T} / \ini(\pset{T}))$. Here empty $\pset{T}_{\leq i}$ and $\pset{I}_{\leq i}$ are understood as $\{0\}$ and $\{1\}$ respectively.

\begin{proposition}\label{prop:normal-proj}
Any normal set $\pset{N}\subset\kx$ has the projection property.
\end{proposition}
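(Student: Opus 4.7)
The plan is to build the required extension coordinate by coordinate using univariate root-finding in $\bar{\fk}$, with normality ensuring that the relevant leading coefficients never degenerate. Write $\pset{N}=[N_1,\ldots,N_r]$ with $\lv(N_j)=x_{d+j}$; since $\pset{N}$ is normal, every $\ini(N_j)$ lies in $\fk[x_1,\ldots,x_d]$. Consequently $\pset{I}_{\leq i}=\ini(\pset{N})$ for every $i\geq 0$, so the hypothesis $\bar{\p{x}}_i\in\zero(\pset{N}_{\leq i}/\pset{I}_{\leq i})$ unpacks into two usable facts: $N_1,\ldots,N_i$ vanish at $\bar{\p{x}}_i$, and $\ini(N_k)(\bar x_1,\ldots,\bar x_d)\neq 0$ for every $k=1,\ldots,r$.

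The main step is an induction on $j=i+1,\ldots,r$. Suppose coordinates $\bar x_{d+i+1},\ldots,\bar x_{d+j-1}$ have already been chosen so that $N_1,\ldots,N_{j-1}$ vanish at $(\bar x_1,\ldots,\bar x_{d+j-1})$. Substituting those values into $N_j$ yields a polynomial in the single variable $x_{d+j}$ over $\bar{\fk}$ whose leading coefficient is $\ini(N_j)(\bar x_1,\ldots,\bar x_d)$, which is non-zero by the initials hypothesis. Hence the substituted polynomial is non-constant of degree $\deg(N_j,x_{d+j})\geq 1$ and admits a root $\bar x_{d+j}\in\bar{\fk}$; choose any such root to continue the induction.

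After $r-i$ such extensions, the $n$-tuple $(\bar x_1,\ldots,\bar x_n)$ is a common zero of $N_1,\ldots,N_r$, and because the initials depend only on the parameter block $x_1,\ldots,x_d$ which was never modified, every $\ini(N_k)$ remains non-zero there. Therefore the extended point lies in $\zero(\pset{N}/\ini(\pset{N}))$, proving the projection property. The argument is routine once one observes that normality forces all initials into $\fk[x_1,\ldots,x_d]$; this is really the only place where anything could have gone wrong, and it is precisely what a non-normal triangular set would lack, which is why the projection property can fail in general.
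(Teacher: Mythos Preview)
Your proof is correct and follows the same approach as the paper: both observe that normality forces $\pset{I}_{\leq i}=\ini(\pset{N})$ for every $i$, so all initials are already known to be nonzero at the given partial zero, and then extend coordinate by coordinate using roots in $\bar{\fk}$. Your version simply spells out the induction that the paper compresses into a single sentence.
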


\begin{proof}
Let $\pset{N} = [N_1, \ldots, N_r]$, and $I_i = \ini(N_i)$ for $i=1, \ldots, r$. Since $\pset{N}$ is a normal set, $\pset{I}_{\leq i} = \{I_1, \ldots, I_r\}$ for $i=0, \ldots, r$. Thus for any $i=0, \ldots, r-1$ and any $\bar{\p{x}}_i \in \zero(\pset{N}_{\leq i} / \pset{I}_{\leq i})$, $I_j(\bar{\p{x}}_i)\neq 0$ for all $j=i+1, \ldots, r$, so there exist $\bar{x}_{i+1}, \ldots, \bar{x}_{r} \in \bar{\fk}$ such that $(\bar{\p{x}}_i, \bar{x}_{i+1}, \ldots, \bar{x}_{r}) \in \zero(\pset{N} / \ini(\pset{N}))$.
\end{proof}

\begin{remarks}
In general regular sets do not have the projection property. Consider, for example, $\pset{T} = [x^2-u, xy+1] \subseteq \qnum[u, x,y]$, where $\qnum$ is the field of rational numbers and $u<x<y$. Then $u$ is the parameter of $\pset{T}$. Now $\pset{T}_{\leq 0} =\pset{I}_{\leq 0} = \emptyset$ and the parametric value $\bar{u}=0 \in \bar{\qnum} = \zero(\pset{T}_{\leq 0} / \pset{I}_{\leq 0})$, but $\zero(\pset{T}/\ini(\pset{T}) ) = \emptyset$ when $u=\bar{u}$.
\end{remarks}

\begin{proposition}\label{prop:satT=T}
Let $\pset{C}$ be the normal W-characteristic set of $\bases{\pset{P}} \subseteq \kx$. If $\sat(\pset{C}) = \bases{\pset{C}}$, then $\sat(\pset{C})=\bases{\pset{P}}$.
\end{proposition}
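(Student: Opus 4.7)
The plan is essentially to squeeze $\bases{\pset{P}}$ between two ideals that coincide under the hypothesis. The key observation is that Proposition~\ref{prop:zero}(b) already gives the two-sided inclusion
\[
\bases{\pset{C}} \;\subseteq\; \bases{\pset{P}} \;\subseteq\; \sat(\pset{C}),
\]
valid for any polynomial set $\pset{P}$ with W-characteristic set $\pset{C}$, with no normality assumption required.

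First I would invoke this inclusion directly. Then, substituting the hypothesis $\sat(\pset{C}) = \bases{\pset{C}}$ into the right-hand side, the chain collapses to $\bases{\pset{C}} \subseteq \bases{\pset{P}} \subseteq \bases{\pset{C}}$, which forces $\bases{\pset{P}} = \bases{\pset{C}} = \sat(\pset{C})$. This gives the desired equality $\sat(\pset{C}) = \bases{\pset{P}}$ in one line.

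There is really no obstacle here: normality of $\pset{C}$ is used only insofar as it is part of the hypothesis ensuring that the equality $\sat(\pset{C}) = \bases{\pset{C}}$ is meaningful (by Proposition~\ref{prop:normal-zero}(a), this is automatic in the zero-dimensional case, while in the positive-dimensional case it is an assumption of the proposition itself). The work has already been done in Proposition~\ref{prop:zero}(b); the present proposition is a direct corollary obtained by sandwiching. I would therefore write the proof as a two-line argument quoting Proposition~\ref{prop:zero}(b) and applying the hypothesis.
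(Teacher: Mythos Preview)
Your proposal is correct and matches the paper's own proof essentially verbatim: the paper simply cites Proposition~\ref{prop:zero}(b) for the chain $\bases{\pset{C}} \subseteq \bases{\pset{P}} \subseteq \sat(\pset{C})$ and concludes immediately. Your observation that normality plays no role in the argument is also accurate.
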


\begin{proof}
The proposition follows immediately from $\bases{\pset{C}} \subseteq \bases{\pset{P}} \subseteq \sat(\pset{C})$ (Proposition~\ref{prop:zero}(b)).
\end{proof}

The reverse direction of Proposition~\ref{prop:satT=T}, namely $\sat(\pset{C}) = \bases{\pset{P}}$ implies $\sat(\pset{C}) = \bases{\pset{C}}$, is not correct in general. For example, $\pset{G} = \{y^2, xz+y, yz, z^2\}\subseteq \fk[x, y, z]$ is a reduced lex \grobner basis with $x<y<z$: the normal W-characteristic set of $\bases{\pset{G}}$ is $\pset{C} = [y^2, xz+y]$, and one can check that $\bases{\pset{G}} = \sat(\pset{C})$, but $\bases{\pset{C}} \neq \sat(\pset{C})$.

 What is of special interest between $\pset{G}$ and $\pset{C}$ in a characteristic pair $(\pset{G}, \pset{C})$ is whether the equality $\bases{\pset{G}} = \sat(\pset{C})$ holds. This equality does hold when $\sat(\pset{C}) = \bases{\pset{C}}$ (according to Proposition~\ref{prop:satT=T}), but the condition $\sat(\pset{C}) = \bases{\pset{C}}$ does not necessarily hold as the above example shows. Moreover, it is computationally difficult to verify whether $\sat(\pset{T}) = \bases{\pset{T}}$ holds for a triangular set $\pset{T}$ \cite{A2015t,L2011w}.

\begin{definition}\label{def:strong}
  A characteristic pair $(\pset{G}, \pset{C})$ is said to be \emph{strong} if $\sat(\pset{C})=\bases{\pset{G}}$.
\end{definition}

\begin{definition}\label{def:characterizable}
A reduced lex \grobner basis $\pset{G}$ is said to be \emph{characterizable} if $\bases{\pset{G}}=\sat(\pset{C})$, where $\pset{C}$ is the W-characteristic set of $\pset{G}$.
\end{definition}

It is easy to see that every W-characteristic set is determined by a reduced lex \grobner basis, while a characterizable \grobner basis is also determined by its W-characteristic set. A strong characteristic pair thus furnishes a characterizable \grobner basis with a normal W-characteristic set. In what follows  we show that the W-characteristic set of any characterizable \grobner basis is normal, so that the characterizable \grobner basis and its W-characteristic set form a strong characteristic pair.

\begin{proposition}\label{prop:strong}
  The W-characteristic set of any characterizable \grobner basis is normal.
\end{proposition}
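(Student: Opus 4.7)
The plan is to deduce normality of $\pset{C}$ from its \emph{regularity} and then invoke the consequence of Theorem~\ref{thm:divisible} noted in the paragraph immediately following it, namely that any regular W-characteristic set contained in a reduced lex \grobner basis is automatically normal. Thus it suffices to show that the characterizability hypothesis $\bases{\pset{G}}=\sat(\pset{C})$ forces $\pset{C}$ to be a regular triangular set; to establish this I would verify the standard characterization recalled in Section~\ref{sec:tri}, namely $\sat(\pset{C})=\{F\in\kx:\prem(F,\pset{C})=0\}$.

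The inclusion ``$\supseteq$'' in this equality holds for an arbitrary triangular set: if $\prem(F,\pset{C})=0$, successive pseudo-division furnishes an exponent $N$ with $J^{N}F\in\bases{\pset{C}}$, where $J$ is the product of the initials of $\pset{C}$, so $F\in\bases{\pset{C}}:J^{\infty}=\sat(\pset{C})$. The reverse inclusion is precisely where characterizability is used: any $F\in\sat(\pset{C})$ lies in $\bases{\pset{G}}$ by hypothesis, and Proposition~\ref{prop:zero}(a), applied to the ideal $\bases{\pset{G}}$ whose W-characteristic set is $\pset{C}$, immediately yields $\prem(F,\pset{C})=0$. Combining the two inclusions gives the required equality, so $\pset{C}$ is regular.

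With regularity of $\pset{C}$ in hand, the observation following Theorem~\ref{thm:divisible} delivers normality at once: contrapositively, if $\pset{C}$ is regular then every initial segment $[C_{1},\ldots,C_{i}]$ is itself regular, so the dichotomy ``$[C_{1},\ldots,C_{k}]$ normal and $[C_{1},\ldots,C_{k+1}]$ not regular'' that would arise from non-normality of $\pset{C}$ cannot occur. The whole argument is quite short; the only place that needs attention is recognising that Proposition~\ref{prop:zero}(a) applies to \emph{every} element of $\bases{\pset{G}}$, not merely to a chosen generating set, so that the ideal-level equality furnished by characterizability translates directly into the two-sided pseudo-remainder characterization of regularity---no case analysis of the alternatives in Theorem~\ref{thm:divisible} is needed.
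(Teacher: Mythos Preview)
Your argument is correct and takes a genuinely different route from the paper's own proof. The paper argues by contradiction and performs a full case analysis using parts (a) and (b) of Theorem~\ref{thm:divisible}: assuming $\pset{C}$ is abnormal, it uses the detailed pseudo-divisibility relations to produce, in each case, an element of $\sat(\pset{C})=\bases{\pset{G}}$ that is either B-reduced with respect to $\pset{G}$ or strictly smaller in the lex order than the corresponding element of $\pset{C}$, contradicting the minimality of the reduced \grobner basis.

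Your approach sidesteps this entirely. You first observe that characterizability forces the equality $\sat(\pset{C})=\{F:\prem(F,\pset{C})=0\}$: one inclusion is generic, while the other is exactly Proposition~\ref{prop:zero}(a) once $\sat(\pset{C})=\bases{\pset{G}}$. By the characterization of regularity recalled in Section~\ref{sec:tri}, this makes $\pset{C}$ regular, and then only the \emph{first} assertion of Theorem~\ref{thm:divisible} (the ``normal/not-regular'' dichotomy) is needed to conclude normality, since regularity of $\pset{C}$ passes to all initial segments by definition. This is shorter and more conceptual; it also relies only on the part of Theorem~\ref{thm:divisible} that does \emph{not} require the parameters-first ordering assumption. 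The price is a dependence on the external ``if and only if'' characterization of regular sets cited from \cite{a99t,W2001E}, whereas the paper's argument is self-contained once Theorem~\ref{thm:divisible} is available.
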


\begin{proof}
We prove the proposition by contradiction. Suppose that the W-characteristic $\pset{C} = [C_1, \ldots, C_r]$ of the characterizable \grobner basis $\pset{G}$  is abnormal. Then by Theorem \ref{thm:divisible} there exist two polynomials $C_{k+1}$ and $C_l~(l \leq k)$ in $\pset{C}$ with $\lv(C_l)=\lv(I_{k+1})$ such that either (a) $\prem(\ini(C_{k+1}), [C_1, \ldots, C_l])) = 0$, when $\ini(C_{k+1})$ is not R-reduced with respect to $C_l$; or (b) $\prem(C_l, [C_1, \ldots, C_{l-1},$ $\ini(C_{k+1})]) = 0$, when $\ini(C_{k+1})$ is R-reduced.

Let $I_i = \ini(C_i)$ for $i=1, \ldots, l$ and $I_{k+1} = \ini(C_{k+1})$.

For case (a), from the pseudo-remainder formula we know that $I_{k+1}\in \sat(\pset{C})$. Write $C_{k+1} = I_{k+1}\lv(C_{k+1})^d + R$, where $\deg(R, \lv(C_{k+1})) < d$. If $R=0$, then $\lv(C_{k+1})^d \in \sat(\pset{C})$, but $\lv(C_{k+1})^d <_{\rm lex} C_{k+1}$, which contradicts with the minimality of $\pset{G}$ as the reduced lex \grobner basis of $\sat(\pset{C})$; If $R\neq 0$, clearly $R \in \sat(\pset{C})$, but $R \not \in \bases{\pset{G}}$ for $R$ is B-reduced with respect to $\pset{G}$, which contradicts the equality $\bases{\pset{G}} = \sat(\pset{C})$. 

For case (b), it follows from the pseudo-remainder formula that there exist $i_1, \ldots, i_l\in \znum_{\geq 0}$ (the set of nonnegative integers) and $Q_1, \ldots, Q_l \in \kx$ such that
$$I_1^{i_1}\cdots I_{l-1}^{i_{l-1}} \ini(I_{k+1})^{i_l} C_l = Q_1C_1 + \cdots + Q_lI_{k+1};$$
clearly $Q_l \in \sat(\pset{C})$. Since $\deg(I_{k+1}, \lv(C_l)) < \deg(C_l, \lv(C_l))$ in this case and all $I_1, \ldots, I_{l-1}$ involve only the parameters, we have $\lv(Q_l)=\lv(C_l)$ but $\deg(Q_l, \lv(C_l)) < \deg(C_l, \lv(C_l))$, and thus $Q_l <_{\rm lex} C_l$. This contradicts with the minimality of $\pset{G}$.
\end{proof}

\begin{remarks}
 \grobner bases are good representations of polynomial ideals. Here it is shown that the W-characteristic set $\pset{C}$ of a characterizable \grobner basis $\pset{G}$ provides another representation of the same ideal $\bases{\pset{G}}$. The representation $\pset{C}$ is simpler than $\pset{G}$ because $\pset{C}$ is a subset of $\pset{G}$, whereas $\pset{G}$ can be computed from $\pset{C}$ if needed. In other words, characterizable \grobner bases are those special \grobner bases whose W-characteristic sets can characterize or represent the ideals they generate.
\end{remarks}

\subsection{Characteristic decomposition and its properties}
\label{sec:dec-normal}

  Let $\pset{F}$ be a finite, nonempty set of polynomials in $\kx$.
  We call a finite set $\{(\pset{G}_1,\pset{C}_1), \ldots, (\pset{G}_{t}, \pset{C}_t)\}$ of characteristic pairs in $\kx$ a \emph{characteristic decomposition} of $\pset{F}$ if the following zero relations hold:
 \begin{equation} \label{eq:NormalDec}
\zero(\pset{F}) = \bigcup_{i=1}^t\zero(\pset{G}_i) = \bigcup_{i=1}^t\zero(\pset{C}_i / \ini(\pset{C}_i)) = \bigcup_{i=1}^t\zero(\sat(\pset{C}_i)).
 \end{equation}

\begin{theorem}\label{thm:main}
  From any finite, nonempty polynomial set $\pset{F}\subseteq \kx$,
  one can compute in a finite number of steps a characteristic decomposition of $\pset{F}$.
\end{theorem}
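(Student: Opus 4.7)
The plan is to prove the theorem constructively by describing a recursive splitting algorithm (to be formalized in Section~\ref{sec:alg}) and verifying its termination and correctness.

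First I would compute the reduced lex \grobner basis $\pset{G}$ of $\bases{\pset{F}}$, so that $\zero(\pset{F})=\zero(\pset{G})$, and extract the W-characteristic set $\pset{C}$ of $\bases{\pset{G}}$ via Definition~\ref{def:wchar}. If $\pset{C}$ is already normal and each initial of $\pset{C}$ is a non-zero-divisor modulo $\bases{\pset{G}}$, then $(\pset{G},\pset{C})$ is a strong characteristic pair for which Proposition~\ref{prop:zero}(c), together with the projection property supplied by Proposition~\ref{prop:normal-proj}, delivers the three equalities of~(\ref{eq:NormalDec}) on its own; I then return $\{(\pset{G},\pset{C})\}$. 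Otherwise I select a splitting polynomial $I$: when $\pset{C}$ is not normal, Theorem~\ref{thm:divisible} identifies $I=\ini(C_{k+1})$ along with accompanying pseudo-divisibility relations; when $\pset{C}$ is normal but some initial of $\pset{C}$ is still a zero-divisor modulo $\bases{\pset{G}}$, I take that initial as $I$. Using the standard Zariski identity
\[
\zero(\pset{G}) \;=\; \zero(\pset{G}\cup\{I\}) \;\cup\; \zero(\bases{\pset{G}}:I^{\infty}),
\]
I pass to the two sub-ideals $\bases{\pset{G},I}$ and $\bases{\pset{G}}:I^{\infty}$, compute their reduced lex \grobner bases by standard means, and recurse on each, aggregating the outputs.

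The main obstacle is termination. I would argue by the ascending chain condition on monomial ideals (Dickson's lemma): every split strictly enlarges the leading-term ideal $\bases{\lt(\pset{G})}$ along each branch. Strict growth on the first branch holds because $I\notin\bases{\pset{G}}$, which one extracts from $I$ appearing as the initial of some $C_{k+1}\in\pset{G}$ combined with the reducedness of $\pset{G}$ (so $I$ is nonzero and B-reduced modulo $\pset{G}$). Strict growth on the second branch holds because $I$ is, by the selection rule, a zero-divisor modulo $\bases{\pset{G}}$: Theorem~\ref{thm:divisible} furnishes a witness in the non-normal case (since $[C_1,\ldots,C_{k+1}]$ there fails to be regular), while in the other case the selection criterion enforces this directly. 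Consequently the recursion tree is finite, and the aggregate zero relation~(\ref{eq:NormalDec}) then propagates upward by induction: it holds at the leaves by the stopping criterion and is preserved at each internal node by the Zariski identity above. Making the strict-containment arguments fully precise, and verifying that a suitable $I$ always exists whenever splitting is needed, is the principal technical burden, which I would carry out in detail in Section~\ref{sec:alg}.
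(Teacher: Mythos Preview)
Your stopping criterion does not do what you claim. You assert that when $\pset{C}$ is normal and every initial of $\pset{C}$ is a non-zero-divisor modulo $\bases{\pset{G}}$, the single pair $(\pset{G},\pset{C})$ already satisfies the three equalities of~\eqref{eq:NormalDec}. This is false: a non-zero-divisor can still vanish on a proper closed subset of $\zero(\pset{G})$, so $\zero(\pset{C}/\ini(\pset{C}))$ may miss points of $\zero(\pset{G})$. Take the paper's own example $\pset{G}=\{y^2,\,xz+y,\,yz,\,z^2\}$ with $x<y<z$; its W-characteristic set is $\pset{C}=[y^2,\,xz+y]$, which is normal with parameter $x$, and $\bases{\pset{G}}=\sat(\pset{C})$. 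By Lemma~\ref{lem:doubleSat} (with $\res(x,\pset{C})=x\neq 0$) the initial $x$ is a non-zero-divisor modulo $\bases{\pset{G}}$, so your algorithm halts immediately and outputs $\{(\pset{G},\pset{C})\}$. But $\zero(\pset{G})=\{(a,0,0):a\in\bar{\fk}\}$ while $\zero(\pset{C}/\ini(\pset{C}))=\{(a,0,0):a\neq 0\}$, so the middle equality of~\eqref{eq:NormalDec} fails. Neither Proposition~\ref{prop:zero}(c) nor the projection property of Proposition~\ref{prop:normal-proj} addresses this: the former gives only an inclusion, and the latter concerns extending partial zeros, not recovering boundary points where an initial vanishes.

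The paper's algorithm avoids this by \emph{never} treating the normal case as a pure leaf: it outputs $(\pset{G},\pset{C})$ \emph{and} simultaneously adjoins each $\pset{G}\cup\{\ini(C)\}$ to the pending set, so that the zeros lost on $\zero(\ini(\pset{C}))$ are recovered in later branches. Your Zariski splitting $\zero(\pset{G})=\zero(\pset{G}\cup\{I\})\cup\zero(\bases{\pset{G}}:I^{\infty})$ would salvage the argument only if you kept splitting on every nonconstant initial even after reaching a strong pair; but then the saturation branch $\bases{\pset{G}}:I^{\infty}=\bases{\pset{G}}$ gives no strict growth, and your termination argument collapses. A secondary issue is that your justification for $I_{k+1}$ being a zero-divisor modulo $\bases{\pset{G}}$ in the abnormal case appeals to irregularity of $[C_1,\ldots,C_{k+1}]$, which is a statement about $\sat([C_1,\ldots,C_k])$, not about $\bases{\pset{G}}$; this can be repaired in case~(a) of Theorem~\ref{thm:divisible} via the pseudo-remainder formula, but case~(b) requires more care and the paper in fact splits on different polynomials there (including $\ini(I_{k+1})$ and $\prem(Q,\pset{C}_{l-1})$ rather than $I_{k+1}$ alone).
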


The above theorem will be proved by giving a concrete algorithm (Algorithm 1 in Section \ref{sec:alg-des}) with correctness and termination proof (in Section \ref{sec:correct}). In what follows, we focus our attention on the properties of characteristic decomposition.

\begin{remarks}
 From any characteristic decomposition of a polynomial set $\pset{F}$, one can extract a normal decomposition $\{\pset{C}_1, \ldots, \pset{C}_t\}$ of $\pset{F}$, with each $\pset{C}_i$ a normal set for $i=1, \ldots, t$ and $\zero(\pset{F}) = \bigcup_{i=1}^t\zero(\pset{C}_i / \ini(\pset{C}_i))$. The projection property of normal sets (see Proposition~\ref{prop:normal-proj}) allows us to write down the conditions on the parameters for a normal set to have zeros for the variables, which makes normal decomposition an appropriate approach for parametric polynomial system solving \cite{c07c,l91n,GC1992s}. The obtained parametric conditions are not necessarily disjoint and thus do not necessarily lead to a partition of the parameter space. However, since the conditions derived from normal sets are expressed by means of initials which involve only the parameters, it is easier to compute comprehensive triangular decompositions \cite{c07c} via normal decomposition than via regular decomposition.
\end{remarks}

A proper ideal in $\kx$ is said to be \emph{purely equidimensional} if its associated prime ideals are all of the same height. According to \cite[Prop.~4.1.3]{A1999e}, $\sat(\pset{T})$ is purely equidimensional for any regular set $\pset{T} \subseteq \kx$. Thus it follows from Proposition~\ref{prop:satT=T} that for any characteristic pair $(\pset{G}, \pset{C})$ in the characteristic decomposition of $\pset{F}$, $\bases{\pset{G}}$ is purely equidimensional if $\sat(\pset{C}) = \bases{\pset{C}}$ is verified. More precisely, we have the following.

\begin{proposition}\label{prop:equi}
  Let $\Psi$ be a characteristic decomposition of $\pset{F}\subseteq \kx$ and assume that $\sat(\pset{C}) = \bases{\pset{C}}$ for every characteristic pair $(\pset{G}, \pset{C}) \in \Psi$. Then
$\sqrt{\bases{\pset{F}}} = \bigcap \nolimits _{(\pset{G}, \pset{C})\in \Psi} \sqrt{\bases{\pset{G}}}$
and each $\bases{\pset{G}}$ is purely equidimensional.
\end{proposition}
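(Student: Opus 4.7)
The proposition splits cleanly into two independent claims, and both are short consequences of results already in hand.

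For the radical identity, the plan is to apply the strong Hilbert Nullstellensatz over $\bar{\fk}$, exploiting the fact that in the paper's convention $\zero(\cdot)$ already means zeros in $\bar{\fk}^n$. The defining zero relations of a characteristic decomposition (equation~\eqref{eq:NormalDec}) give
\[
\zero(\pset{F}) \;=\; \bigcup_{(\pset{G},\pset{C})\in\Psi} \zero(\pset{G}).
\]
Taking the vanishing ideal $I(\cdot)$ of both sides and using that $I$ turns arbitrary unions into intersections yields
\[
\sqrt{\bases{\pset{F}}} \;=\; I(\zero(\pset{F})) \;=\; \bigcap_{(\pset{G},\pset{C})\in\Psi} I(\zero(\pset{G})) \;=\; \bigcap_{(\pset{G},\pset{C})\in\Psi} \sqrt{\bases{\pset{G}}}.
\]
No hypothesis on $\pset{C}$ is needed for this half; it uses only the zero relation $\zero(\pset{F}) = \bigcup \zero(\pset{G}_i)$ and the Nullstellensatz.

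For the pure equidimensionality claim, I will work pair by pair. Fix any $(\pset{G}, \pset{C}) \in \Psi$. By definition of characteristic pair, $\pset{C}$ is the normal W-characteristic set of $\bases{\pset{G}}$, so Proposition~\ref{prop:satT=T} (applied with $\pset{P} = \pset{G}$) combined with the standing assumption $\sat(\pset{C}) = \bases{\pset{C}}$ gives $\sat(\pset{C}) = \bases{\pset{G}}$. Since any normal set is regular (noted right after Example~\ref{ex:tri}), $\pset{C}$ is a regular set, and the cited result \cite[Prop.~4.1.3]{A1999e} guarantees that $\sat(\pset{C})$ is purely equidimensional. Hence $\bases{\pset{G}}$ is purely equidimensional, as desired.

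There is no real obstacle here — both halves are one-line applications of results the paper has already assembled. The only point worth phrasing carefully is the Nullstellensatz step: I would make explicit that zeros are taken in the algebraic closure $\bar{\fk}^n$ (matching the paper's definition of $\zero$) so that $I(\zero(\cdot)) = \sqrt{\cdot}$ is legitimate over the arbitrary ground field $\fk$, and that the intersection over $\Psi$ is finite so no care with arbitrary intersections is needed.
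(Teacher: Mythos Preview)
Your proof is correct and matches the paper's own reasoning: the paper does not supply a formal proof of this proposition but explains in the preceding paragraph exactly the argument you give for equidimensionality (normal $\Rightarrow$ regular, then Proposition~\ref{prop:satT=T} and \cite[Prop.~4.1.3]{A1999e}), and the radical identity is the standard Nullstellensatz consequence of the zero relation~\eqref{eq:NormalDec}.
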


In fact, the ideal $\bases{\pset{G}}$ in Proposition~\ref{prop:equi} is also \emph{strongly} equidimensional according to \cite[Thm.~4.1.4]{A1999e}. The following theorem shows how a Ritt characteristic set of a polynomial ideal can be constructed from the W-characteristic set (when it is normal) of the ideal.

\begin{theorem}\label{thm:reduced}
Let $\pset{C} \!=\! [C_1, \ldots, C_r]$ be the W-characteristic set of $\bases{\pset{P}} \subseteq \kx$ and
\begin{equation}
  \label{eq:cstar}
\pset{C}^* \!\!=\!\! [C_1, \prem(C_2, [C_1]), \ldots,\! \prem(C_r, [C_1, \!\ldots\!, C_{r-1}])].
\end{equation}
If $\pset{C}$ is normal, then the following statements hold:
\begin{enumerate}
\item[$(a)$] $\pset{C}^*$ is a normal set;
\item[$(b)$] $\pset{C}^*$ is a Ritt characteristic set of $\bases{\pset{P}}$;
\item[$(c)$] $\zero(\pset{C}^* / \ini(\pset{C}^*)) = \zero(\pset{C} / \ini(\pset{C}))$.
\end{enumerate}
\end{theorem}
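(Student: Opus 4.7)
The crux of the proof is a careful tracking of what successive pseudo-reduction by $[C_1, \ldots, C_{i-1}]$ does to the ``top'' of $C_i$ with respect to $\lv(C_i)$. Since $\pset{C}$ is assumed normal, every $\ini(C_j)$ lies in the parameter ring $\fk[\tilde{\p{x}}]$ and in particular involves none of the leading variables $\lv(C_1), \ldots, \lv(C_{r})$. My first step is to prove a structural lemma: for each $i$, $\lv(C_i^*) = \lv(C_i)$, $\deg(C_i^*, \lv(C_i)) = \deg(C_i, \lv(C_i))$, and $\ini(C_i^*) = \alpha_i \ini(C_i)$ where $\alpha_i$ is a monomial in $\ini(C_1), \ldots, \ini(C_{i-1})$. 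The point is that when we pseudo-divide by $C_j$ in the variable $\lv(C_j) < \lv(C_i)$, the monomial $\ini(C_i) \lv(C_i)^{\deg(C_i, \lv(C_i))}$ contains no $\lv(C_j)$, so the subtraction steps in the pseudo-division algorithm (being selected to cancel $\lv(C_j)$-highest terms) can only touch the strictly lower-degree-in-$\lv(C_i)$ part of $C_i$; the only effect on the top is the accumulated multiplication by powers of $\ini(C_j)$. The careful bookkeeping here — in particular verifying that the pseudo-quotients never contribute back a term of maximal degree $\deg(C_i, \lv(C_i))$ in $\lv(C_i)$ — is the main technical obstacle, but it reduces to observing that at each elementary reduction the multiplier has coefficients of strictly smaller degree in $\lv(C_i)$ than $\deg(C_i, \lv(C_i))$.

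Part $(a)$ then falls out: the leading variables and the parameters of $\pset{C}^*$ match those of $\pset{C}$, and since $\ini(C_i^*) = \alpha_i \ini(C_i) \in \fk[\tilde{\p{x}}]$, the triangular set $\pset{C}^*$ is normal. R-reducedness holds by construction of the pseudo-remainders.

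For part $(b)$, the pseudo-remainder identity $\ini(C_1)^{k_1}\cdots \ini(C_{i-1})^{k_{i-1}} C_i = \sum_{j<i} Q_j C_j + C_i^*$ and $\pset{C} \subseteq \bases{\pset{P}}$ give $C_i^* \in \bases{\pset{P}}$. To conclude that $\prem(P, \pset{C}^*) = 0$ for every $P \in \bases{\pset{P}}$, I will invoke the characterization of regular sets recalled in Section~\ref{sec:tri}: since $\pset{C}^*$ is regular (by $(a)$), $\sat(\pset{C}^*) = \{F \mid \prem(F, \pset{C}^*) = 0\}$, so it suffices to show $\bases{\pset{P}} \subseteq \sat(\pset{C}^*)$. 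Combining $\bases{\pset{P}} \subseteq \sat(\pset{C})$ from Proposition~\ref{prop:zero}$(b)$ with the equality $\sat(\pset{C}) = \sat(\pset{C}^*)$ then finishes $(b)$. That equality I would establish from two facts: $\bases{\pset{C}^*} \subseteq \bases{\pset{C}}$ directly from the pseudo-remainder identity, and each $C_i$ belongs to $\bases{\pset{C}^*} : J^\infty$ with $J = \prod_k \ini(C_k)$, by the same identity; then, using the structural lemma, $J^* = \prod_k \ini(C_k^*)$ and $J$ divide each other's powers, so saturation by $J^*$ and saturation by $J$ coincide on any ideal.

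For part $(c)$, the structural lemma gives $\zero(\prod_k \ini(C_k^*)) = \zero(\prod_k \ini(C_k))$ because each $\alpha_i$ is a monomial in $\ini(C_1), \ldots, \ini(C_{i-1})$. The equality of zero sets off the initials then follows by induction on $i$ from the pseudo-remainder identity: at any $\bar{\p{x}}$ where all $\ini(C_k)(\bar{\p{x}}) \neq 0$ (equivalently, all $\ini(C_k^*)(\bar{\p{x}}) \neq 0$), the identity above, evaluated at $\bar{\p{x}}$ with the inductive hypothesis that $C_1(\bar{\p{x}}) = \cdots = C_{i-1}(\bar{\p{x}}) = 0$, lets us pass freely between $C_i(\bar{\p{x}}) = 0$ and $C_i^*(\bar{\p{x}}) = 0$.
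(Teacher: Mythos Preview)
Your proof is correct and tracks the paper's argument closely. The structural observation that $\ini(C_i^*) = I_1^{q_1}\cdots I_{i-1}^{q_{i-1}} I_i$ (your ``structural lemma'') is exactly the paper's key point, and your treatment of $(c)$ via the pseudo-remainder identity and induction matches the paper's proof essentially line for line.

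The one genuine difference is in part $(b)$. The paper dispatches regularity of $\pset{C}^*$ and the Ritt characteristic set property by citing an external result (\cite[Thm.~3.4]{W2016o}) and then only has to check normality. You instead give a self-contained argument: from the structural lemma you get that $J = \prod_k \ini(C_k)$ and $J^* = \prod_k \ini(C_k^*)$ have the same radical, and combining this with the pseudo-remainder identities yields $\sat(\pset{C}^*) = \sat(\pset{C})$; then Proposition~\ref{prop:zero}$(b)$ gives $\bases{\pset{P}} \subseteq \sat(\pset{C}^*)$, and the regular-set characterization $\sat(\pset{C}^*) = \{F \mid \prem(F, \pset{C}^*) = 0\}$ finishes. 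This is a clean and fully internal route that avoids the external citation; it also produces the useful by-product $\sat(\pset{C}^*) = \sat(\pset{C})$, which the paper does not state explicitly here. The cost is a slightly longer argument, but nothing substantive is different.
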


\begin{proof}
Let $\pset{C}^* = [C_1^*, \ldots, C_r^*]$, $I_i = \ini(C_i)$, and $I_i^* = \ini(C_i^*)$ for $i=1, \ldots, r$.

(a--b) According to \cite[Thm.~3.4]{W2016o}, $\pset{C}^*$ is a regular set and $\pset{C}^*$ is a Ritt characteristic set of $\bases{\pset{P}}$; hence it suffices to prove that $\pset{C}^*$ is normal. Since $C^*_i = \prem(C_i, [C_1, \ldots, C_{i-1}])$ for any $i=1, \ldots, r$, there exist $q_1, \ldots, q_{i-1}\in \znum_{\geq 0}$ and $Q_1, \ldots$, $Q_{i-1}\in \kx$ such that
  \begin{equation}
    \label{eq:psuedo}
I_1^{q_1}\cdots I_{i-1}^{q_{i-1}}C_i = Q_1 C_1 + \cdots + Q_{i-1}C_{i-1} + C_i^*.
  \end{equation}
Since $\ini(C_i)$ does not involve any of $\lv(C_1), \ldots, \lv(C_{i-1})$, we have $\ini(C_i^*) = I_1^{q_1}\cdots I_{i-1}^{q_{i-1}}I_i$ for $i=1, \ldots, r$; thus $\pset{C}^*$ is normal.

(c) On one hand, for any $\bar{\p{x}} \in \zero(\pset{C} / \ini(\pset{C}))$,  $C_i(\bar{\p{x}}) = 0$ and $I_i(\bar{\p{x}}) \neq 0$ for $i=1, \ldots, r$. From \eqref{eq:psuedo} we know that $C_i^*(\bar{\p{x}}) = 0$ and $I_i^*(\bar{\p{x}}) \neq 0$ for $i=1, \ldots, r$; or equivalently $\bar{\p{x}} \in \zero(\pset{C}^* / \ini(\pset{C}^*))$. On the other hand, for any $\hat{\p{x}} \in \zero(\pset{C}^*/ \ini(\pset{C}^*))$, $C_i^*(\hat{\p{x}}) = 0$ and $I_i^{*}(\hat{\p{x}}) \neq 0$ for $i=1, \ldots, r$. Clearly $C_1 = C_1^*$, $C_1(\hat{\p{x}}) = 0$ and $I_1(\hat{\p{x}}) \neq 0$. Suppose now that $C_i(\hat{\p{x}}) = 0$ and $I_i(\hat{\p{x}})\neq 0$ hold for $i=2, \ldots, k-1$. Then by \eqref{eq:psuedo} and $I_k^* = I_1^{q_1}\cdots I_{k-1}^{q_{k-1}}I_k$ we have $C_k(\hat{\p{x}}) = 0$ and $I_k(\hat{\p{x}})\neq 0$. By induction, $\hat{\p{x}} \in \zero(\pset{C}/ \ini(\pset{C}))$.
\end{proof}

\begin{corollary}\label{cor:RittDec}
  Let $\Psi$ be a characteristic decomposition of $\pset{F}\subseteq \kx$  and $\pset{C}^*$ be computed from $\pset{C}$ according to \eqref{eq:cstar} for each characteristic pair $(\pset{G}, \pset{C}) \in \Psi$. Then
$$\zero(\pset{F}) = \bigcup \nolimits _{(\pset{G}, \pset{C})\in \Psi} \zero(\pset{C}^* / \ini(\pset{C}^*))=\bigcup \nolimits _{(\pset{G}, \pset{C})\in \Psi} \zero(\sat(\pset{C}^*)),$$
and $\pset{C}^*$\! is the Ritt characteristic set of $\bases{\pset{G}}$ for each $(\pset{G},\pset{C})\! \in\! \Psi$. \end{corollary}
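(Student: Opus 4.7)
The plan is to package Theorem~\ref{thm:reduced}, applied to each pair separately, into the characteristic decomposition framework. Fix a pair $(\pset{G}, \pset{C}) \in \Psi$. By definition of characteristic pair, $\pset{C}$ is the normal W-characteristic set of $\bases{\pset{G}}$, so Theorem~\ref{thm:reduced} applies with $\bases{\pset{P}} = \bases{\pset{G}}$. Part (b) will immediately give the last clause of the corollary, that $\pset{C}^*$ is a Ritt characteristic set of $\bases{\pset{G}}$. Part (c) will give $\zero(\pset{C}^*/\ini(\pset{C}^*)) = \zero(\pset{C}/\ini(\pset{C}))$; unioning this identity over $\Psi$ and substituting into the middle equality of \eqref{eq:NormalDec} will produce the first equality of the corollary.

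For the second equality I will use a sandwich argument. The inclusion $\zero(\pset{C}^*/\ini(\pset{C}^*)) \subseteq \zero(\sat(\pset{C}^*))$ is routine from the definition of $\sat$: any $F \in \sat(\pset{C}^*)$ satisfies $J^k F \in \bases{\pset{C}^*}$ for some $k$, with $J$ the product of the initials of $\pset{C}^*$, and therefore $F$ vanishes at every common zero of $\pset{C}^*$ where $J$ does not. In the reverse direction, Theorem~\ref{thm:reduced}(a) shows that $\pset{C}^*$ is normal and hence regular, so the characterization $\sat(\pset{T}) = \{F : \prem(F, \pset{T}) = 0\}$ for regular $\pset{T}$ recalled in Section~\ref{sec:tri}, combined with part (b), yields $\bases{\pset{G}} \subseteq \sat(\pset{C}^*)$ and thus $\zero(\sat(\pset{C}^*)) \subseteq \zero(\pset{G})$. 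Taking the union over $\Psi$ will then sandwich $\bigcup \zero(\sat(\pset{C}^*))$ between $\bigcup \zero(\pset{C}^*/\ini(\pset{C}^*))$ and $\bigcup \zero(\pset{G}) = \zero(\pset{F})$, and the first equality of the corollary will force all three unions to coincide.

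I do not anticipate a substantive obstacle, since the corollary is essentially a repackaging of Theorem~\ref{thm:reduced}. The only point that warrants care is this: for a non-strong characteristic pair one may well have $\sat(\pset{C}^*) \supsetneq \bases{\pset{G}}$, as Definition~\ref{def:strong} is precisely the condition ruling this out, so the equality $\zero(\sat(\pset{C}^*)) = \zero(\pset{G})$ need not hold pair by pair. The sandwich sidesteps this by working entirely at the level of the unions, which is the only place the zero relations from the definition of characteristic decomposition are genuinely invoked.
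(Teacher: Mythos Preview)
Your proposal is correct. The paper states this corollary without proof, treating it as immediate from Theorem~\ref{thm:reduced} and the zero relations \eqref{eq:NormalDec}; your argument is precisely the natural unpacking of that, and the sandwich for the $\sat$ equality is a clean way to avoid needing $\sat(\pset{C}^*)=\sat(\pset{C})$ (which also follows, via the pseudo-division formula \eqref{eq:psuedo}, and would give the second equality directly from \eqref{eq:NormalDec}).
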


\begin{corollary}\label{prop:wchar-gb}
Let $\pset{P}$, $\pset{C}$, and $\pset{C}^*$ be as in Theorem~\ref{thm:reduced} and $\tilde{\p{x}}\subseteq \p{x}$ be the parameters of $\pset{C}$. Then both $\pset{C}$ and $\pset{C}^*$ are lex \grobner bases of $\bases{\pset{P}}$ over $\fk(\tilde{\p{x}})$. Furthermore, let $\pset{C}^* = [C_1^*, \ldots, C_r^*]$. Then $[C_1^* \ini(C_1^*)^{-1}, \ldots, C_r^*\ini(C_r^*)^{-1}]$ is the reduced lex \grobner basis of $\bases{\pset{P}}$ over $\fk(\tilde{\p{x}})$.
\end{corollary}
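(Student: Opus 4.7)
The plan is to prove the three assertions in sequence, driven by one central observation: because $\pset{C}$ and $\pset{C}^*$ are both normal with parameters $\tilde{\p{x}}$, every initial in sight lies in $\fk[\tilde{\p{x}}]$ and becomes a unit after extending scalars to $\fk(\tilde{\p{x}})$. Consequently, in the extended ring $\fk(\tilde{\p{x}})[\p{x}\setminus\tilde{\p{x}}]$, the saturations $\sat(\pset{C})$ and $\sat(\pset{C}^*)$ collapse to the ordinary ideals $\bases{\pset{C}}$ and $\bases{\pset{C}^*}$ respectively, because saturation is defined as $\bases{\cdot}:J^{\infty}$ with $J$ a product of these now-invertible initials.

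For the first assertion I would combine Proposition~\ref{prop:zero}(b), which gives $\bases{\pset{C}} \subseteq \bases{\pset{P}} \subseteq \sat(\pset{C})$, with this collapse to conclude $\bases{\pset{P}}=\bases{\pset{C}}$ over $\fk(\tilde{\p{x}})$. Proposition~\ref{prop:normal-pos} applied to the normal set $\pset{C}$ (or Proposition~\ref{prop:normal-zero}(b) when $\tilde{\p{x}}=\emptyset$) then yields that $\pset{C}$ is the lex \grobner basis of $\bases{\pset{P}}$ over $\fk(\tilde{\p{x}})$. The same template handles $\pset{C}^*$: Theorem~\ref{thm:reduced}(a,b) supplies that $\pset{C}^*$ is normal and is a Ritt characteristic set of $\bases{\pset{P}}$, so $\bases{\pset{C}^*}\subseteq\bases{\pset{P}}$ by definition, while $\bases{\pset{P}}\subseteq\sat(\pset{C}^*)$ since every element of $\bases{\pset{P}}$ pseudo-remainders to zero against the regular set $\pset{C}^*$. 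Collapse plus Proposition~\ref{prop:normal-pos} again gives the desired \grobner basis conclusion.

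For the reduced basis claim, dividing each $C_i^*$ by $I_i^*=\ini(C_i^*)$ is legitimate because $I_i^*$ is a unit in $\fk(\tilde{\p{x}})$, and produces a monic polynomial whose lex leading term is the pure power $\lv(C_i^*)^{d_i}$ with $d_i=\deg(C_i^*,\lv(C_i^*))$. Reducedness then requires that no other term of any $C_i^*$ be divisible by $\lv(C_j^*)^{d_j}$ for $j\neq i$. When $j>i$, the variable $\lv(C_j^*)$ is absent from $C_i^*$ by the triangular structure; when $j<i$, the R-reducedness of $\pset{C}^*$ (built into the definition of Ritt characteristic set invoked in Theorem~\ref{thm:reduced}(b)) gives $\deg(C_i^*,\lv(C_j^*))<d_j$, which is exactly the non-divisibility needed.

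The delicate step, and the main source of potential error, is the base change from $\kx$ to $\fk(\tilde{\p{x}})[\p{x}\setminus\tilde{\p{x}}]$: one must check that the inclusion chain among $\bases{\pset{C}}$, $\bases{\pset{P}}$ and $\sat(\pset{C})$ survives the extension and that saturation really trivialises once the initials become units. Once this bookkeeping is in place, the rest is an immediate invocation of Propositions~\ref{prop:normal-zero}, \ref{prop:normal-pos}, \ref{prop:zero} and Theorem~\ref{thm:reduced}.
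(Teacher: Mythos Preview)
Your proposal is correct and follows essentially the same approach as the paper: apply Proposition~\ref{prop:normal-pos} to the normal set $\pset{C}$ to obtain a lex \grobner basis of $\bases{\pset{C}}$ over $\fk(\tilde{\p{x}})$, then use the inclusion chain $\bases{\pset{C}}\subseteq\bases{\pset{P}}\subseteq\sat(\pset{C})$ from Proposition~\ref{prop:zero}(b) together with the collapse $\sat(\pset{C})=\bases{\pset{C}}$ over $\fk(\tilde{\p{x}})$ (the paper packages this step as Proposition~\ref{prop:satT=T}); the reduced-basis claim is then exactly the degree check you give, using that each $\ini(C_i^*)$ is a unit and that $\pset{C}^*$ is R-reduced. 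Your treatment of $\pset{C}^*$ is in fact more explicit than the paper's own proof, which handles $\pset{C}$ carefully but leaves the analogous argument for $\pset{C}^*$ largely implicit.
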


\begin{proof}
By Proposition~\ref{prop:normal-pos}, $\pset{C}$ is the lex \grobner basis of $\bases{\pset{C}}$ over $\fk(\tilde{\p{x}})$. As $\sat(\pset{C}) = \bases{\pset{C}}$ over $\fk(\tilde{\p{x}})$, from Proposition~\ref{prop:satT=T} we know that $\pset{C}$ is the lex \grobner basis of $\bases{\pset{P}}$ over $\fk(\tilde{\p{x}})$. Moreover, since each $\ini(C_i^*)$ is a constant in $\fk(\tilde{\p{x}})$ for $i=1, \ldots, r$ and $\deg(C_j^*, \lv(C_i^*)) < \deg(C_i^*, \lv(C_i^*))$ ($1\leq i, j \leq r$, $j\neq i$), $[C_1^* \ini(C_1^*)^{-1}, \ldots, C_r^*\ini(C_r^*)^{-1}]$ is the reduced lex \grobner basis of $\bases{\pset{P}}$ over $\fk(\tilde{\p{x}})$.
\end{proof}

Note that the R-reduced normal set $\pset{C}^*$ computed in \eqref{eq:cstar} is not necessarily a subset of $\pset{G}$ (while the W-characteristic set $\pset{C}$ is). Corollary~\ref{prop:wchar-gb} does not hold when the W-characteristic set $\pset{C}$ is abnormal. In fact, the problem of constructing a Ritt characteristic set of an ideal $\ideal{I}$ from the reduced lex \grobner basis of $\ideal{I}$ was already studied by Aubry, Moreno Maza, and Lazard in their influential paper \cite{a99t} of 1999. It is the first author of this paper who pointed out with an example in \cite{W2016o} that the relevant results including Theorem 3.1, Proposition 3.4, and Theorem 3.2 in Section 3 of \cite{a99t} are flawed. He showed that the construction is simple and straightforward in the normal case and made the problem of effective construction open again for the abnormal case. The incorrectness of the construction process in \cite{a99t} is caused essentially by the non-closeness of addition on polynomials having pseudo-remainder $0$ with respect to an irregular triangular set $\pset{T}$, that is, $\prem(P, \pset{T}) = \prem(Q, \pset{T}) = 0$ does not necessarily imply $\prem(P+Q, \pset{T}) = 0$ for arbitrary polynomials $P$ and $Q$. The non-closeness seems to be a major obstacle for constructing abnormal Ritt characteristic sets of polynomial ideals.

\subsection{Strong characteristic decomposition}
\label{sec:strongCharDec}

Now we show that for any characteristic decomposition $\Psi = \{(\pset{G}_1, \pset{C}_1), \ldots, (\pset{G}_t, \pset{C}_t)\}$ of $\pset{F} \subseteq \kx$, one can explicitly transform $\Psi$ into a strong characteristic decomposition $\bar{\Psi} = \{(\bar{\pset{G}}_1, \bar{\pset{C}}_1), \ldots, (\bar{\pset{G}}_t, \bar{\pset{C}}_t)\}$, where each $(\bar{\pset{G}}_i, \bar{\pset{C}}_i)$ is a strong characteristic pair for $i=1, \ldots, t$.

\begin{lemma}[{{\cite[Lem.~2.4]{W2016o}}}]\label{lem:prem}
 Let $\pset{T}=[T_1, \ldots, T_r] \subseteq \kx$ be a regular set with $\lv(T_r)< x_n$, and $P = P_dx_m^d + \cdots + P_1 x_m + P_0 \in \kx$ be a polynomial with $\lv(P) = x_m >\lv(T_r)$ and $\deg(P,x_m)=d$. Then $\prem(P, \pset{T}) = 0$ if and only if $\prem(P_i, \pset{T}) = 0$ for all $i=0, 1, \ldots, d$.
\end{lemma}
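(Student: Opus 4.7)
My plan is to convert the statement into an ideal-membership question, using the regularity criterion recalled in Section~\ref{sec:tri}: a triangular set $\pset{T}$ is regular if and only if $\sat(\pset{T}) = \{F \in \kx : \prem(F,\pset{T}) = 0\}$. Applying this criterion to $P$ and to each $P_i$ separately, the lemma becomes equivalent to
$$ P = \sum_{i=0}^{d} P_i\, x_m^i \in \sat(\pset{T}) \ \Longleftrightarrow\ P_i \in \sat(\pset{T}) \ \text{for every}\ i=0,1,\dots,d. $$

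The $(\Leftarrow)$ direction is immediate because $\sat(\pset{T})$ is an ideal of $\kx$ and the $x_m^i$ are scalars in $\kx$. For $(\Rightarrow)$, the key structural fact I will exploit is that the hypothesis $\lv(T_r) < x_m$ forces $T_1,\dots,T_r$ and their initials $I_1,\dots,I_r$ all to lie in the subring $S := \fk[x_1,\dots,x_{m-1}]$; and the coefficients $P_i$ already lie in $S$ as well, because $\lv(P) = x_m$ prevents $x_{m+1},\dots,x_n$ from appearing in $P$. Setting $J := \sat(\pset{T}) \cap S$, my intention is to prove the extension identity $\sat(\pset{T}) = J\cdot\kx$, and then invoke the fact that $\kx = S[x_m,\dots,x_n]$ is a free $S$-module with basis the monomials in $x_m,\dots,x_n$. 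That freeness supplies a coordinate-wise criterion for membership in $J\cdot\kx$: an element belongs to $J\cdot\kx$ iff each of its coefficients in the monomial basis lies in $J$. Applied to $P = \sum_i P_i\, x_m^i$, this yields $P_i \in J \subseteq \sat(\pset{T})$ for every $i$, which by the regularity criterion is exactly $\prem(P_i,\pset{T}) = 0$.

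The step that demands real care, and which I expect to be the main obstacle, is the extension identity $\sat(\pset{T}) = J \cdot \kx$, since $\sat(\pset{T}) = \bases{T_1,\dots,T_r}:(I_1\cdots I_r)^\infty$ is defined inside $\kx$ while $J$ is its contraction to the smaller ring $S$. I plan to verify it directly: given $F \in \sat(\pset{T})$, choose $k$ with $(I_1\cdots I_r)^k F \in \bases{T_1,\dots,T_r}$, expand the identity in the $S$-basis of monomials in $x_m,\dots,x_n$, and observe that because every $T_j$ and every $I_j$ is free of $x_m,\dots,x_n$ the equation decouples into one equation per coefficient; a common exponent $k$ then works simultaneously for all of the finitely many nonzero coefficients of $F$, forcing each of them (automatically elements of $S$) to lie in $J$. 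Once this extension identity is established, both implications of the lemma fall out at once from the regularity criterion, with no further computation required.
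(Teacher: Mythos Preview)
The paper does not supply its own proof of this lemma: it is quoted from \cite[Lem.~2.4]{W2016o} and used as a black box. There is therefore no in-paper argument to compare against, and I will assess your proposal on its own.

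Your argument is correct. The reduction to $\sat(\pset{T})$-membership via the regularity criterion recalled in Section~\ref{sec:tri} is valid, and the extension identity $\sat(\pset{T})=(\sat(\pset{T})\cap S)\cdot\kx$ with $S=\fk[\p{x}_{m-1}]$ follows exactly as you sketch: the generators $T_1,\dots,T_r$ and the saturating element $I_1\cdots I_r$ all lie in $S$ because $\lv(T_r)<x_m$, the ring $\kx=S[x_m,\dots,x_n]$ is a free $S$-module on the monomials in $x_m,\dots,x_n$, and the relation $(I_1\cdots I_r)^kF\in\langle T_1,\dots,T_r\rangle$ therefore decouples coefficient-by-coefficient in that basis with the single exponent $k$ serving all coefficients at once. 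Since the $P_i$ already lie in $S$ (from $\lv(P)=x_m$), both implications drop out immediately.

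For context, a more computational route---and plausibly what the cited source does---is to write the global pseudo-remainder formula $I_1^{q_1}\cdots I_r^{q_r}P=\sum_jQ_jT_j+\prem(P,\pset{T})$ and compare the $x_m^i$-coefficients on both sides, using that no $T_j$ or $I_j$ involves $x_m$; each coefficient of $\prem(P,\pset{T})$ is then R-reduced with respect to $\pset{T}$ and congruent to $I_1^{q_1}\cdots I_r^{q_r}P_i$ modulo $\langle\pset{T}\rangle$, from which the equivalence follows (again via regularity). Your ideal-theoretic packaging is equivalent and arguably cleaner, as it sidesteps the exponent bookkeeping entirely.
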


\begin{lemma}[{{\cite[Lem.~6.2.6]{W2001E}}}]\label{lem:doubleSat}
  Let $\pset{T}$ be a regular set in $\kx$. Then for any $F\in \kx$, if $\res(F, \pset{T}) \neq 0$, then $\sat(\pset{T}):F^{\infty} = \sat(\pset{T})$.
\end{lemma}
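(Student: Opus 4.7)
The plan is to convert the hypothesis $\res(F,\pset{T}) \ne 0$ into a polynomial identity of Bezout type and then exploit the fact that $\res(F,\pset{T})$ lies in the parameter ring of $\pset{T}$. The inclusion $\sat(\pset{T}) \subseteq \sat(\pset{T}):F^{\infty}$ is immediate from the definition of the colon ideal, so only the reverse inclusion requires work.

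First, I would derive the identity
\[
\res(F,\pset{T}) \;=\; A\,F + \sum_{i=1}^{r} B_i\, T_i
\]
for some $A, B_1, \ldots, B_r \in \kx$, by telescoping the one-variable identity $\res(P,T_i) = \alpha P + \beta T_i$ (applied in the leading variable $\lv(T_i)$) through the nested definition of the iterated resultant. Then for any $G$ with $F^{k} G \in \sat(\pset{T})$, raising this identity to the $k$th power and multiplying by $G$ yields
\[
\res(F,\pset{T})^{k}\, G \;\equiv\; A^{k} F^{k} G \pmod{\bases{\pset{T}}}.
\]
Since $\bases{\pset{T}} \subseteq \sat(\pset{T})$ and $A^{k} F^{k} G \in \sat(\pset{T})$, we obtain $h^{k} G \in \sat(\pset{T})$, where $h := \res(F,\pset{T})$.

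The remaining step uses two structural observations. First, $h$ is a nonzero polynomial in the parameters $\tilde{\p{x}}$ of $\pset{T}$ only, because the iterated resultant successively eliminates $\lv(T_r), \lv(T_{r-1}), \ldots, \lv(T_1)$. Second, since $\pset{T}$ is regular, the characterization $\sat(\pset{T}) = \{P \in \kx : \prem(P,\pset{T}) = 0\}$ applies, and as $h$ is free of every $\lv(T_i)$, a direct unwinding of pseudo-division gives $\prem(h^{k} G, \pset{T}) = h^{k}\, \prem(G, \pset{T})$. Combining these, $h^{k} G \in \sat(\pset{T})$ forces $h^{k} \prem(G,\pset{T}) = 0$ in the integral domain $\kx$, whence $\prem(G,\pset{T}) = 0$ and $G \in \sat(\pset{T})$, as required.

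The main obstacle is the first step: carefully assembling the single identity over $\kx$ from the one-variable cofactor pairs produced at each stage of the iterated resultant. Each stage eliminates a distinct leading variable and introduces its own cofactors, which must then be propagated through the outer resultants so that the final expression groups cleanly as $AF + \sum_i B_i T_i$. Once that identity is in hand, the rest is largely bookkeeping: the regularity of $\pset{T}$ enters only through the pseudo-remainder characterization of $\sat(\pset{T})$, and the parameter-only nature of $h$ takes care of the cancellation.
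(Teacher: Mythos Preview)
The paper does not supply its own proof of this lemma; it is simply quoted from \cite[Lem.~6.2.6]{W2001E} and used as a black box. Your proposal is a correct self-contained argument and would serve as a valid proof in place of the citation. The three ingredients you identify---the telescoped B\'ezout identity $\res(F,\pset{T})=AF+\sum_i B_iT_i$, the fact that $\res(F,\pset{T})$ lies in the parameter subring, and the regularity characterization $\sat(\pset{T})=\{P:\prem(P,\pset{T})=0\}$ (stated in Section~\ref{sec:tri})---combine exactly as you describe, and the factorization $\prem(h^kG,\pset{T})=h^k\,\prem(G,\pset{T})$ is justified because $h$ is free of every $\lv(T_i)$, so the pseudo-division exponents for $G$ and $h^kG$ coincide at each step.
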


\begin{theorem}\label{prop:equiv}
  Let $(\pset{G}, \pset{C})$ be a characteristic pair, $\bar{\pset{G}}$ be the reduced lex \grobner basis of $\sat(\pset{C})$, and $\bar{\pset{C}}$ be the W-characteristic set of $\bases{\bar{\pset{G}}}$. Then the following statements hold:
  \begin{enumerate}
  \item[$(a)$] the parameters of $\bar{\pset{C}}$ coincide with those of $\pset{C}$;
  \item[$(b)$] $\bar{\pset{C}}$ is a normal set, $\sat(\bar{\pset{C}}) = \sat(\pset{C})$, and thus $(\bar{\pset{G}}, \bar{\pset{C}})$ is a strong characteristic pair.
    \end{enumerate}

\end{theorem}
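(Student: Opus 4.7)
The plan is to establish part (a) first, and then use it to deduce part (b) by adapting the argument of Proposition~\ref{prop:strong} to our setting. Let $\tilde{\p{x}} = \{x_1, \ldots, x_d\}$ denote the parameters of $\pset{C}$, which under our standing convention are the smallest variables.

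For (a) I would first verify that $\sat(\pset{C}) \cap \fk[\tilde{\p{x}}] = \{0\}$: since $\pset{C}$ is normal, Proposition~\ref{prop:normal-pos} makes it the lex \grobner basis of $\bases{\pset{C}}$ over $\fk(\tilde{\p{x}})$ and so the extended ideal is proper, forcing any putative nonzero element of the intersection (a unit over $\fk(\tilde{\p{x}})$) to be zero. Consequently no $G \in \bar{\pset{G}}$ lies in $\fk[\tilde{\p{x}}]$, which, as $\tilde{\p{x}}$ are the smallest variables, means no $G \in \bar{\pset{G}}$ has $\lv(G) \in \tilde{\p{x}}$; hence the parameters of $\bar{\pset{C}}$ contain $\tilde{\p{x}}$. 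For the reverse inclusion, fix a non-parameter $x_j$ of $\pset{C}$ and the unique $C_i \in \pset{C}$ with $\lv(C_i) = x_j$: normality forces $\ini(C_i) \in \fk[\tilde{\p{x}}]$, so $\lt(C_i)$ is a monomial in $\tilde{\p{x}} \cup \{x_j\}$ of positive $x_j$-degree. Since $C_i \in \sat(\pset{C}) = \bases{\bar{\pset{G}}}$, some $G \in \bar{\pset{G}}$ satisfies $\lt(G) \mid \lt(C_i)$; the leading variable of $G$ therefore lies in $\tilde{\p{x}} \cup \{x_j\}$ and, by the first step, cannot be in $\tilde{\p{x}}$, whence $\lv(G) = x_j$. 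So every non-parameter of $\pset{C}$ is a leading variable of $\bar{\pset{C}}$, finishing (a).

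Part (b) splits into proving that $\bar{\pset{C}}$ is normal and then deducing $\sat(\bar{\pset{C}}) = \sat(\pset{C})$. For normality I would mimic the proof of Proposition~\ref{prop:strong} almost verbatim, the only substantive change being that the characterizability hypothesis $\bases{\pset{G}} = \sat(\pset{C})$ used there is replaced by the construction-supplied identity $\bases{\bar{\pset{G}}} = \sat(\pset{C})$. Concretely, assuming $\bar{\pset{C}} = [\bar{C}_1, \ldots, \bar{C}_r]$ is abnormal, Theorem~\ref{thm:divisible} produces indices $l \leq k$ with $[\bar{C}_1, \ldots, \bar{C}_l]$ normal and a bad $\bar{C}_{k+1}$; by (a), the initials $\ini(\bar{C}_1), \ldots, \ini(\bar{C}_l)$ belong to $\fk[\tilde{\p{x}}]$, so $\res(\ini(\bar{C}_i), \pset{C}) = \ini(\bar{C}_i) \neq 0$, and Lemma~\ref{lem:doubleSat} gives $\sat(\pset{C}):\ini(\bar{C}_i)^\infty = \sat(\pset{C})$. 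These saturation identities let me lift the pseudo-remainder relations of Theorem~\ref{thm:divisible} into $\sat(\pset{C}) = \bases{\bar{\pset{G}}}$ and then derive the same contradiction with the reduced-minimality of $\bar{\pset{G}}$ as is exhibited in the proof of Proposition~\ref{prop:strong}. I expect case (b) of Theorem~\ref{thm:divisible} to be the main obstacle: one must extract from the pseudo-remainder formula a polynomial $Q_l \in \bases{\bar{\pset{G}}}$ with $\lv(Q_l) = \lv(\bar{C}_l)$ and $Q_l <_{\rm lex}\bar{C}_l$, contradicting the W-characteristic choice of $\bar{C}_l$ as the $<_{\rm lex}$-smallest element of $\bar{\pset{G}}$ with that leading variable; the degree bookkeeping there is the most delicate part.

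Once $\bar{\pset{C}}$ is known to be normal, the equality $\sat(\bar{\pset{C}}) = \sat(\pset{C})$ follows from two easy inclusions. The inclusion $\sat(\pset{C}) \subseteq \sat(\bar{\pset{C}})$ is immediate from Proposition~\ref{prop:zero}(b) applied to $\bar{\pset{G}}$, which yields $\sat(\pset{C}) = \bases{\bar{\pset{G}}} \subseteq \sat(\bar{\pset{C}})$. For the reverse, normality of $\bar{\pset{C}}$ together with (a) puts each $\ini(\bar{C}_i)$ in $\fk[\tilde{\p{x}}]$, so Lemma~\ref{lem:doubleSat} again yields $\sat(\pset{C}):\ini(\bar{C}_i)^\infty = \sat(\pset{C})$; combined with $\bases{\bar{\pset{C}}} \subseteq \bases{\bar{\pset{G}}} = \sat(\pset{C})$, this delivers $\sat(\bar{\pset{C}}) \subseteq \sat(\pset{C})$. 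Hence $\sat(\bar{\pset{C}}) = \bases{\bar{\pset{G}}}$, and $(\bar{\pset{G}}, \bar{\pset{C}})$ is a strong characteristic pair by Definition~\ref{def:strong}.
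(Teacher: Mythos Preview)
Your proposal is correct, but both halves take a different route from the paper, and the comparison is instructive.

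For part~(a), the paper proves $\lv(\pset{G}) = \lv(\bar{\pset{G}})$ by two contradiction arguments: one direction uses Lemma~\ref{lem:prem} (expanding a hypothetical $\bar{G}$ in its leading variable and showing each coefficient already lies in $\bases{\bar{\pset{G}}}$), and the other reduces $\ini(C)$ to zero via $\bar{\pset{G}}$. Your argument---first showing $\sat(\pset{C})\cap\fk[\tilde{\p{x}}]=\{0\}$ via Proposition~\ref{prop:normal-pos}, then using leading-term divisibility for the reverse inclusion---is more elementary and avoids Lemma~\ref{lem:prem} entirely. Both are fine; yours leans more directly on the standing assumption that parameters are the smallest variables.

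For the normality of $\bar{\pset{C}}$ in part~(b), however, the paper's argument is substantially shorter and does not pass through Theorem~\ref{thm:divisible} at all. Instead it exploits the \emph{original} normal set $\pset{C}$: from $\prem(\bar{C}_i,\pset{C})=0$ and $\prem(C_i,\bar{\pset{C}})=0$ one gets $\deg(\bar{C}_i,\lv(\bar{C}_i))=\deg(C_i,\lv(C_i))$, and then, if $\ini(\bar{C}_i)$ involved a non-parameter, the normality of $\pset{C}$ would force $C_i<_{\rm lex}\bar{C}_i$; since $C_i\in\bases{\bar{\pset{G}}}$, this contradicts the minimality of $\bar{C}_i$ in $\bar{\pset{G}}$. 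This sidesteps the case analysis you anticipate from Theorem~\ref{thm:divisible} and, in particular, the ``delicate degree bookkeeping'' in case~(b) that you flag. Your approach should still go through (the saturation identities from Lemma~\ref{lem:doubleSat} do the lifting you need), but it reproves the hard part of Proposition~\ref{prop:strong} rather than using the extra structure available here, namely that a normal $\pset{C}$ with the same leading variables already sits inside $\bases{\bar{\pset{G}}}$.

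For the equality $\sat(\bar{\pset{C}})=\sat(\pset{C})$ your argument and the paper's coincide.
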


\begin{proof}
(a) Let $\lv(\pset{G}) = \{\lv(G):\, G\in \pset{G}\}$ and $\lv(\bar{\pset{G}}) = \{\lv(\bar{G}):\, \bar{G}\in \bar{\pset{G}}\}$. It suffices to prove that $\lv(\pset{G}) = \lv(\bar{\pset{G}})$.

$(\lv(\pset{G}) \supseteq \lv(\bar{\pset{G}}))$ Suppose that $x_k \in \lv(\bar{\pset{G}})$, but $x_k \not \in \lv(\pset{G})$. Then there exists a $\bar{G}\in \bar{\pset{G}}$ with $\lv(\bar{G})=x_k$. Since $\bar{G}\in \bases{\bar{\pset{G}}} = \sat(\pset{C})$, $\prem(\bar{G}, \pset{C}) = 0$. Write $\bar{G} = \bar{H}_p x_k^p + \cdots + \bar{H}_0$, where $p=\deg(\bar{G})$ and $\bar{H}_i\in \fk[x_1, \ldots, x_{k-1}]$ for $i=0, \ldots, p$. Since $x_k\not \in \lv(\pset{G})$ and $\prem(\bar{G}, \pset{C}) = 0$, from Lemma \ref{lem:prem} we know that for $i=0,\ldots, p$, $\prem(\bar{H}_i, \pset{C}) = 0$ and thus $\bar{H}_i \in \sat(\pset{C}) = \bases{\bar{\pset{G}}}$. This implies that $\nform(\bar{H}_i, \bar{\pset{G}}\setminus \{\bar{G}\})=0$ for $i=0, \ldots, p$, and thus $\nform(\bar{G}, \bar{\pset{G}}\setminus \{\bar{G}\})=0$, which contradicts with $\bar{\pset{G}}$ being the reduced lex \grobner basis.

$(\lv(\pset{G}) \subseteq \lv(\bar{\pset{G}}))$ Suppose that $x_l \in \lv(\pset{G})$, but $x_l \not \in \lv(\bar{\pset{G}})$. Then there exists a $C \in \pset{C} \subseteq \pset{G}$ with $\lv(C) = x_l$, and thus $C \in \sat(\pset{C}) = \bases{\bar{\pset{G}}}$; it follows that $\nform(C, \bar{\pset{G}}) = 0$. Write $C = \ini(C)x_l^d + R$, where $d = \deg(C, x_l)$ and $R\in \fk[x_1, \ldots, x_l]$. Then from $\nform(C, \bar{\pset{G}})=0$ and $x_l \not \in \lv(\bar{\pset{G}})$, we know that $\nform(\ini(C), \bar{\pset{G}}) = 0$, and thus $\ini(C) \in \bases{\bar{\pset{G}}} = \sat(\pset{C})$. It follows that $\prem(\ini(C), \pset{C}) = 0$; this contradicts with the fact that $C\in \pset{C}$ and $\pset{C}$ is a normal set.

(b) Let $\pset{C}=[C_1, \ldots, C_r]$. By (a) we can assume that $\bar{\pset{C}} = [\bar{C}_1, \ldots, \bar{C}_r]$ with $\lv(\bar{C}_i) = \lv(C_i)$ for $i=1, \ldots, r$. Since $\bar{C}_i \in \bases{\bar{\pset{G}}} = \sat(\pset{C})$ and $\pset{C}$ is regular, $\prem(\bar{C}_i, \pset{C}) = 0$; since $C_i \in \sat(\pset{C}) = \bases{\bar{\pset{G}}}$ and $\bar{\pset{C}}$ is the W-characteristic set of $\bar{\pset{G}}$, $\prem(C_i, \bar{\pset{C}})=0$. This leads to $\deg(\bar{C}_i, \lv(\bar{C}_i)) = \deg(C_i, \lv(C_i))$. 

We first prove that $\bar{\pset{C}}$ is a normal set, namely $\bar{I}_i := \ini(\bar{C}_i)$ only involves the parameters for each $i=1, \ldots, r$. If, otherwise, some $\bar{I}_i$ involves the variables in $\lv(\bar{\pset{G}})$, then, under the assumption that all the parameters are ordered smaller than the variables in $\lv(\bar{\pset{G}})$, we have
$C_i <_{lex} \bar{C_i}$, for $\deg(\bar{C}_i, \lv(\bar{C}_i)) = \deg(C_i, \lv(C_i))$. But $C_i \in \bases{\bar{\pset{G}}}$, and this contradicts with the minimality of $\bar{C_i}$ as an element in $\bar{\pset{G}}$. 

Next we show the equality $\sat(\pset{\bar{C}}) = \sat(\pset{C})$. Since $\bases{\bar{\pset{G}}} = \sat(\pset{C})$, it suffices to show that $\bases{\bar{\pset{G}}} = \sat(\bar{\pset{C}})$. On one hand, from Proposition~\ref{prop:zero}(b) we know that $\bases{\bar{\pset{G}}} \subseteq \sat(\bar{\pset{C}})$. On the other hand, let $\bar{I} = \prod_{\bar{C}\in \bar{\pset{C}}} \ini(\bar{C})$. Since $\bar{I}$ only involves the parameters of $\pset{C}$, $\res(\bar{I}, \pset{C}) = \bar{I} \neq 0$. By Lemma \ref{lem:doubleSat} we have $\sat(\pset{C}):\bar{I}^{\infty} = \bases{\bar{\pset{G}}}$. With the inclusion $\bar{\pset{C}}\subseteq \bar{\pset{G}}$, the following relation
$$\sat(\bar{\pset{C}}) = \bases{\bar{\pset{C}}}:\bar{I}^{\infty} \subseteq \bases{\bar{\pset{G}}}:\bar{I}^{\infty} = \sat(\pset{C}):\bar{I}^{\infty} = \bases{\bar{\pset{G}}}$$
holds.
\end{proof}

\begin{example}
The pair
$$(\pset{G}, \pset{C}) = (\{y^2, x^2z+xy, yz+xz+y\}, [y^2,x^2z+xy])$$
is a characteristic pair in $\qnum[x, y, z]$ with $x<y<z$. The reduced lex \grobner basis $\bar{\pset{G}}$ of $\sat(\pset{C})$ is $\{y^2, xz+y, yz, z^2\}$, so the W-characteristic set $\bar{\pset{C}}$ of $\bar{\pset{G}}$ is $[y^2, xz+y]$. One can check that the parameter of both $\pset{C}$ and $\bar{\pset{C}}$ is $x$, $\bar{\pset{C}}$ is normal, and $\sat(\pset{C}) = \sat(\bar{\pset{C}})$.
\end{example}

\begin{lemma}\label{lem:zeroStrong}
  Let $\pset{C}=[C_1, \ldots, C_r]$ and $\bar{\pset{C}} = [\bar{C}_1, \ldots, \bar{C}_r]$ be as in Theorem~\ref{prop:equiv}. Then $\ini(\bar{C}_i) | \ini(C_i)$ for $i=1, \ldots, r$.
\end{lemma}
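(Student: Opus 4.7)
The plan is to exploit the facts from Theorem~\ref{prop:equiv}: $\lv(\bar{C}_i)=\lv(C_i)=:x_{m_i}$, $\deg(\bar{C}_i,x_{m_i})=\deg(C_i,x_{m_i})=:d_i$, and $\ini(\bar{C}_i),\ini(C_i)\in\fk[\tilde{\p{x}}]$ (by normality of $\pset{C}$ and $\bar{\pset{C}}$), together with the pseudo-remainder
\[
R_i \;:=\; \ini(\bar{C}_i)\,C_i \;-\; \ini(C_i)\,\bar{C}_i,
\]
which lies in $\bases{\bar{\pset{G}}}=\sat(\pset{C})$ and has $\deg(R_i,x_{m_i})<d_i$. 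Two observations locate $R_i$ precisely. First, every element of $\bar{\pset{G}}$ whose leading variable equals $x_{m_i}$ has $x_{m_i}$-degree at least $d_i$, by the minimality of $\bar{C}_i$ in the W-characteristic set. Second, normality of $\pset{C}$ gives $\sat(\pset{C})\cap\fk[\tilde{\p{x}}]=(0)$, so no element of $\bar{\pset{G}}$ has a parameter as its leading variable; by Theorem~\ref{prop:equiv}(a,b), the leading variables of $\bar{\pset{G}}$ are exactly $\{x_{m_1},\ldots,x_{m_r}\}$. These two facts force Buchberger reduction of $R_i$ by $\bar{\pset{G}}$ to use only elements in $\bar{\pset{G}}\cap\fk[x_1,\ldots,x_{m_{i-1}}]$, so $R_i$ lies in the elimination ideal $\mathfrak{E}_i:=\bases{\bar{\pset{G}}}\cap\fk[x_1,\ldots,x_{m_{i-1}}]$.

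Along the way I would establish that $\bar{C}_i$ is \emph{primitive} as a polynomial in $x_{m_i}$ over $\fk[x_1,\ldots,x_{m_i-1}]$: the gcd $g$ of its coefficients is a unit. Since $g\mid\ini(\bar{C}_i)\in\fk[\tilde{\p{x}}]$, one has $g\in\fk[\tilde{\p{x}}]$, and if $g$ were non-constant then $\bar{C}_i/g\in\sat(\pset{C})$ would have lex leading term strictly less than $\lt(\bar{C}_i)$. The same case analysis as above forbids this: no other element of $\bar{\pset{G}}^{(m_i)}$ can reduce $\lt(\bar{C}_i/g)$ (by reducedness of $\bar{\pset{G}}$, the leading monomial of the initial of any such competitor is incomparable with $\lt(\ini(\bar{C}_i))$ in divisibility), reducing by $\bar{C}_i$ itself would force $\lt(g)\mid 1$, and no element with leading variable $<x_{m_i}$ can divide any monomial of the shape $\mu\,x_{m_i}^{d_i}$ with $\mu\in\fk[\tilde{\p{x}}]$.

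I would finish by induction on $i$. For the base case $i=1$ the ideal $\mathfrak{E}_1=\bases{\bar{\pset{G}}}\cap\fk[\tilde{\p{x}}]$ is trivial, so $R_1=0$ and $\ini(\bar{C}_1)\,C_1=\ini(C_1)\,\bar{C}_1$ is an identity in the UFD $\fk[\tilde{\p{x}}][x_{m_1}]$; taking contents in $\fk[\tilde{\p{x}}]$ and using the primitivity of $\bar{C}_1$ yields $\ini(\bar{C}_1)\mid\ini(C_1)$ at once. For $i>1$ the identity degenerates to the congruence $\ini(\bar{C}_i)\,C_i\equiv\ini(C_i)\,\bar{C}_i\pmod{\mathfrak{E}_i}$, and the content argument has to be carried out inside $\fk[x_1,\ldots,x_{m_i-1}]/\mathfrak{E}_i$. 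The main obstacle is precisely this inductive step: one must check that $\ini(\bar{C}_i)$ remains a non-zero-divisor modulo $\mathfrak{E}_i$ (a consequence of $\mathfrak{E}_i\cap\fk[\tilde{\p{x}}]=(0)$ together with the fact that, by normality of $[\bar{C}_1,\ldots,\bar{C}_{i-1}]$, no nonzero element of $\fk[\tilde{\p{x}}]$ vanishes on $V(\mathfrak{E}_i)$), and that the primitivity of $\bar{C}_i$ from the second paragraph descends to the quotient, so that the base-case content argument can be re-run there; the inductive hypotheses $\ini(\bar{C}_j)\mid\ini(C_j)$ for $j<i$ are what ultimately make the contents at the lower levels cancel cleanly.
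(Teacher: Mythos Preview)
Your setup (the polynomial $R_i=\ini(\bar C_i)\,C_i-\ini(C_i)\,\bar C_i$, the observation that its $x_{m_i}$-degree is $<d_i$, and the reduction argument showing its coefficients land in the elimination ideal $\mathfrak{E}_i$) and your primitivity argument for $\bar C_i$ are both sound and are essentially the same ingredients the paper uses. The base case $i=1$ is fine.

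The gap is the inductive step. A Gauss/content argument requires the coefficient ring to be at least a GCD domain, and $\fk[x_1,\ldots,x_{m_{i-1}}]/\mathfrak{E}_i$ is in general neither a domain nor a GCD ring. Knowing that $\ini(\bar C_i)$ is a non-zero-divisor there and that $\bar C_i$ is primitive \emph{over the polynomial ring} does not let you ``take contents'' modulo $\mathfrak{E}_i$: primitivity need not descend to such a quotient, and your sentence ``the inductive hypotheses $\ini(\bar C_j)\mid\ini(C_j)$ for $j<i$ make the contents at the lower levels cancel cleanly'' does not correspond to any concrete step. So as written the $i>1$ case is not proved.

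The paper avoids this difficulty by \emph{not} separating primitivity from the divisibility claim. It works modulo $\sat(\pset{C}_{\leq i-1})$ and argues a dichotomy: either $\ini(\bar C_i)\mid\ini(C_i)$ in the quotient (and then, since both initials lie in $\fk[\tilde{\p x}]$ and the ideal meets $\fk[\tilde{\p x}]$ trivially, the divisibility lifts to $\fk[\tilde{\p x}]$), or some nonconstant $\bar I\mid\ini(\bar C_i)$ divides $\bar C_i$ in the quotient. In the second case one gets $\bar C_i/\bar I\in\sat(\bar{\pset C})=\langle\bar{\pset G}\rangle$ with $\bar C_i/\bar I<_{\mathrm{lex}}\bar C_i$, contradicting the minimality of $\bar C_i$ in the W-characteristic set. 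Note that this last contradiction is exactly your primitivity argument, just applied \emph{in the quotient} to the hypothetical factor $\bar I$, rather than proved beforehand in the polynomial ring and then combined with a content argument that does not survive passage to the quotient. If you restructure your proof along these lines---assume $\ini(\bar C_i)\nmid\ini(C_i)$, and directly manufacture an element of $\langle\bar{\pset G}\rangle$ with leading variable $x_{m_i}$ that is lex-smaller than $\bar C_i$---you can bypass the content machinery and the induction entirely.
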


\begin{proof}
From the proof of Theorem~\ref{prop:equiv} we know that $\prem(\bar{C}_i, \pset{C}) = 0$, $\prem(C_i, \bar{\pset{C}})$ = 0, and $\deg(\bar{C}_i, \lv(\bar{C}_i)) = \deg(C_i, \lv(C_i))$ for $i=1, \ldots, r$. Then
$$\ini(C_i) \bar{C}_i = \ini(\bar{C}_i) C_i \mod \sat(\pset{C}_{\leq i-1}),$$
where $\pset{C}_{\leq i-1} = [C_1, \ldots, C_{i-1}]$. Therefore, $\ini(\bar{C}_i) \,|\, \ini(C_i) \bar{C}_i$ modulo $\sat(\pset{C}_{\leq i-1})$.

If $\ini(\bar{C}_i) \,|\, \ini(C_i)$ modulo $\sat(\pset{C}_{\leq i-1})$, then clearly $\ini(\bar{C}_i) \,|\, \ini(C_i)$, for both $\ini(\bar{C}_i)$ and $\ini(C_i)$ involve only the parameters. Otherwise, there exists an $\bar{I}\not \in \fk$ such that $\bar{I} \,|\, \ini(\bar{C}_i)$ and $\bar{I} \,|\, \bar{C}_i$ modulo $\sat(\pset{C}_{\leq i-1})$. Then by $(\bar{C}_i / \bar{I}) \bar{I} = \bar{C}_i \in \bases{\bar{\pset{C}}}$, we have $\bar{C}_i / \bar{I} \in \sat(\bar{\pset{C}}) = \bases{\bar{\pset{G}}}$, but $\bar{C}_i / \bar{I} <_{\rm lex} \bar{C}_i$, which contradicts with the minimality of $\bar{C}_i$ as a polynomial in the W-characteristic set.
\end{proof}

\begin{theorem}\label{thm:zeroStrong}
  Let $\Psi = \{(\pset{G}_1, \pset{C}_1), \ldots, (\pset{G}_t, \pset{C}_t)\}$ be a characteristic decomposition of $\pset{F}\subseteq \kx$. For each $(\pset{G}_i, \pset{C}_i)\in \Psi$,
let $(\bar{\pset{G}}_i, \bar{\pset{C}}_i)$ be the corresponding strong characteristic pair as constructed in Theorem \ref{prop:equiv}, $i=1, \ldots, t$. Then
$$\zero(\pset{F}) = \bigcup_{i=1}^t\zero(\bar{\pset{G}_i}) = \bigcup_{i=1}^t \zero(\bar{\pset{C}_i} / \ini(\bar{\pset{C}_i}))=\bigcup_{i=1}^t \zero(\sat(\bar{\pset{C}_i})).$$
\end{theorem}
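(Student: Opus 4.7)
The plan is to reduce everything to the three identities in \eqref{eq:NormalDec} already guaranteed for the original characteristic decomposition, using the two ingredients provided by the previous two results: the identity $\sat(\bar{\pset{C}}_i)=\sat(\pset{C}_i)=\bases{\bar{\pset{G}}_i}$ from Theorem~\ref{prop:equiv}(b), and the coordinate-wise divisibility $\ini(\bar{C}_j)\mid\ini(C_j)$ from Lemma~\ref{lem:zeroStrong}, where throughout I write $\pset{C}_i=[C_1,\ldots,C_r]$ and $\bar{\pset{C}}_i=[\bar C_1,\ldots,\bar C_r]$ (with the parameters and leading variables of $\bar{\pset{C}}_i$ matching those of $\pset{C}_i$ by Theorem~\ref{prop:equiv}(a)).

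The outer two equalities come essentially for free. Since $\bases{\bar{\pset{G}}_i}=\sat(\bar{\pset{C}}_i)=\sat(\pset{C}_i)$ for every $i$, taking zero sets and unioning over $i$ yields
$$\bigcup_{i=1}^{t}\zero(\bar{\pset{G}}_i)\;=\;\bigcup_{i=1}^{t}\zero(\sat(\bar{\pset{C}}_i))\;=\;\bigcup_{i=1}^{t}\zero(\sat(\pset{C}_i))\;=\;\zero(\pset{F}),$$
where the final identity is the third relation in \eqref{eq:NormalDec}.

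What remains is the middle equality. The inclusion $\bigcup_{i}\zero(\bar{\pset{C}}_i/\ini(\bar{\pset{C}}_i))\subseteq\zero(\pset{F})$ is immediate from Proposition~\ref{prop:zero}(c) applied to $\bar{\pset{G}}_i$ (whose W-characteristic set is $\bar{\pset{C}}_i$): $\zero(\bar{\pset{C}}_i/\ini(\bar{\pset{C}}_i))\subseteq\zero(\bar{\pset{G}}_i)$, and the display above closes the chain. For the reverse inclusion I will prove the component-wise statement $\zero(\pset{C}_i/\ini(\pset{C}_i))\subseteq\zero(\bar{\pset{C}}_i/\ini(\bar{\pset{C}}_i))$ for each $i$, and then conclude via the second union in \eqref{eq:NormalDec}. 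Fix $\bar{\p{x}}\in\zero(\pset{C}_i/\ini(\pset{C}_i))$. Because $\pset{C}_i$ is normal and hence regular, every $P\in\sat(\pset{C}_i)$ satisfies $J^{k}P\in\bases{\pset{C}_i}$ for some $k\geq 0$, where $J=\prod_{j}\ini(C_j)$; evaluating at $\bar{\p{x}}$, where $\pset{C}_i$ vanishes and $J$ does not, forces $P(\bar{\p{x}})=0$. Hence $\bar{\p{x}}\in\zero(\sat(\pset{C}_i))=\zero(\sat(\bar{\pset{C}}_i))\subseteq\zero(\bar{\pset{C}}_i)$. Finally, Lemma~\ref{lem:zeroStrong} gives $\ini(\bar{C}_j)\mid\ini(C_j)$, so the non-vanishing of each $\ini(C_j)$ at $\bar{\p{x}}$ propagates to $\ini(\bar{C}_j)$, placing $\bar{\p{x}}$ in $\zero(\bar{\pset{C}}_i/\ini(\bar{\pset{C}}_i))$ as required.

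The argument is essentially bookkeeping once Theorem~\ref{prop:equiv} and Lemma~\ref{lem:zeroStrong} are in hand. The only step that requires a moment's thought, and which I expect to be the main (if mild) obstacle, is the passage from $\pset{C}_i(\bar{\p{x}})=0$ with $\ini(\pset{C}_i)(\bar{\p{x}})\neq 0$ to $\bar{\p{x}}\in\zero(\sat(\pset{C}_i))$; as indicated, it is dispatched using the regularity of $\pset{C}_i$ together with the $J^{\infty}$-saturation definition of $\sat(\pset{C}_i)$.
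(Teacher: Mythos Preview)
Your proof is correct and follows essentially the same route as the paper's: you derive the first and third equalities directly from $\bases{\bar{\pset{G}}_i}=\sat(\bar{\pset{C}}_i)=\sat(\pset{C}_i)$ (Theorem~\ref{prop:equiv}(b)) together with \eqref{eq:NormalDec}, and you obtain the middle equality by showing the componentwise inclusion $\zero(\pset{C}_i/\ini(\pset{C}_i))\subseteq\zero(\bar{\pset{C}}_i/\ini(\bar{\pset{C}}_i))$ via $\zero(\pset{C}_i/\ini(\pset{C}_i))\subseteq\zero(\sat(\pset{C}_i))\subseteq\zero(\bar{\pset{C}}_i)$ plus the divisibility of initials from Lemma~\ref{lem:zeroStrong}. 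One small remark: the step ``$P\in\sat(\pset{C}_i)\Rightarrow J^{k}P\in\bases{\pset{C}_i}$'' is immediate from the definition $\sat(\pset{C}_i)=\bases{\pset{C}_i}:J^{\infty}$ and does not require regularity.
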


\begin{proof}
 Note that the zero relation \eqref{eq:NormalDec} holds for the characteristic decomposition $\Psi$, which may be computed according to Theorem \ref{thm:main}.
 The first and the third equality above can be proved by using the equalities $\sat(\pset{C}_i) = \sat(\bar{\pset{C}_i})$ in Theorem~\ref{prop:equiv}(b) and $\bases{\bar{\pset{G}_i}} = \sat(\pset{C}_i)$:
$$\zero(\pset{F}) = \bigcup_{i=1}^t\zero(\sat(\pset{C}_i)) = \bigcup_{i=1}^t\zero(\sat(\bar{\pset{C}_i})) = \bigcup_{i=1}^t\zero(\bar{\pset{G}_i}).$$

Now we prove that $\zero(\pset{F}) =  \bigcup_{i=1}^t \zero(\bar{\pset{C}_i} / \ini(\bar{\pset{C}_i}))$. Since $\zero(\bar{\pset{C}_i} / \ini(\bar{\pset{C}_i})) \subseteq \zero(\sat(\bar{\pset{C}_i}))$, we have
$$\bigcup_{i=1}^t \zero(\bar{\pset{C}_i} / \ini(\bar{\pset{C}_i})) \subseteq \bigcup_{i=1}^t \zero(\sat(\bar{\pset{C}_i})) = \zero(\pset{F}).$$
To show the other inclusion, observe first that
\begin{equation}
  \label{eq:zeroStrong}
\zero(\pset{C}_i / \ini(\pset{C}_i)) \subseteq \zero(\sat(\pset{C}_i)) = \zero(\bar{\pset{G}_i}) \subseteq \zero(\bar{\pset{C}_i}).
\end{equation}
for $i=1, \ldots, t$.
  Write $\pset{C}_i = [C_{i1}, \ldots, C_{ir_i}]$ and $\bar{\pset{C}}_i = [\bar{C}_{i1}, \ldots, \bar{C}_{ir_i}]$. Then for any $\bar{\p{x}} \in \zero(\pset{C}_i / \ini(\pset{C}_i))$, $\prod_{j=1}^{r_i} \ini(C_{ij})(\bar{\p{x}}) \neq 0$. By Lemma~\ref{lem:zeroStrong} we have $\ini(\bar{C}_{ij}) | \ini(C_{ij})$ for $j=1, \ldots, r_i$, and thus $\prod_{j=1}^{r_i}\ini(\bar{C}_{ij})(\bar{\p{x}}) \neq 0$. Combining this inequality with \eqref{eq:zeroStrong}, we have $\zero(\pset{C}_i / \ini(\pset{C}_i)) \subseteq \zero(\bar{\pset{C}_i} /  \ini(\bar{\pset{C}_i}))$, and thus $\zero(\pset{F}) = \bigcup_{i=1}^t \zero(\pset{C}_i / \ini(\pset{C}_i)) \subseteq \bigcup_{i=1}^t \zero(\bar{\pset{C}_i} / \ini(\bar{\pset{C}_i}))$. This completes the proof.
\end{proof}

\begin{remarks}
As shown by Theorems \ref{prop:equiv} and \ref{thm:zeroStrong}, any characteristic decomposition $\Psi$ of a polynomial set can be transformed into a strong characteristic decomposition of the polynomial set by computing the reduced lex \grobner basis of the saturated ideal of the W-characteristic set in each characteristic pair in $\Psi$. No splitting occurs in this transformation: the number of strong characteristic pairs produced by the transformation is the same as that of the characteristic pairs in $\Psi$.
\end{remarks}

\section{Algorithm for characteristic decomposition}
\label{sec:alg}

In this section we present an algorithm that computes a characteristic decomposition of any finite, nonempty set of nonzero polynomials.

\subsection{Algorithm description}
\label{sec:alg-des}
An overall strategy based on Theorem~\ref{thm:divisible} for characteristic decomposition is sketched in \cite[Sect.\,4]{W2016o}. Following this strategy, we detail the decomposition method below.

Let $\Phi$ be a set of polynomial sets, initialized as $\{\pset{F}\}$ with $\pset{F} \subseteq \kx$ being the input set, and $\Psi$ be a set of characteristic pairs already computed. Now we pick a polynomial set $\pset{P}\in\Phi$ and remove it from $\Phi$, compute the reduced lex \grobner basis $\pset{G}$ of the ideal $\bases{\pset{P}}$, and extract the W-characteristic set $\pset{C}=[C_1, \ldots, C_r]$ of $\bases{\pset{P}}$ from $\pset{G}$. Let $I_i = \ini(C_i)$ for $i=1, \ldots, r$.

\begin{itemize}
\item[1.] If $\pset{C}$ is normal, then from Proposition~\ref{prop:zero}(c) we know that
\begin{equation}
  \label{eq:zero}
\zero(\pset{C} / \ini(\pset{C})) \subseteq \zero(\pset{P}) \subseteq \zero(\pset{C}).
\end{equation}
In view of this zero relation, we put the characteristic pair $(\pset{G}, \pset{C})$ into $\Psi$ and adjoin the polynomial sets $\pset{G} \cup \{I_1\}, \ldots, \pset{G} \cup \{I_r\}$ to $\Phi$ for further processing.
\item[2.] If $\pset{C}$ is not normal, then by Theorem~\ref{thm:divisible} we have certain pseudo-divisibility relations between polynomials in $\pset{C}$ and can use them to split $\pset{G}$ as follows, where the integers $k$ and $l$ are as in Theorem~\ref{thm:divisible}.
  \begin{itemize}
  \item[2.1.] If $I_{k+1}$ is not R-reduced with respect to $C_l$, then the polynomial sets $\pset{G}\cup \{I_1\}, \ldots, \pset{G}\cup \{I_l\}, \pset{G} \cup \{I_{k+1}\}$ are adjoined to $\Phi$.
  \item[2.2.] If $I_{k+1}$ is R-reduced with respect to $C_l$, then let $Q = \pquo(C_l, I_{k+1})$ be the pseudo-quotient of $C_l$ with respect to $I_{k+1}$ and $\pset{C}_{l-1}=[C_1, \ldots, C_{l-1}]$.
    \begin{itemize}
    \item[2.2.1.] If $\prem(\ini(Q), \pset{C}_{l-1}) =0$, then the polynomial sets $\pset{G} \cup \{I_1\}, \ldots, \pset{G} \cup \{I_{l-1}\}$, $\pset{G} \cup \{\ini(I_{k+1})\}$ are adjoined to $\Phi$.
    \item[2.2.2.] If $\prem(\ini(Q), \pset{C}_{l-1}) \neq 0$, then the polynomial sets $\pset{G} \cup \{I_1\}, \ldots, \pset{G}\cup \{I_{l-1}\}$, $\pset{G} \cup \{\prem(Q, \pset{C}_{l-1})\}$, $\pset{G}\cup \{I_{k+1}\}$ are adjoined to $\Phi$.
    \end{itemize}
  \end{itemize}
\end{itemize}

After the splitting of $\pset{G}$, we continue picking another polynomial set $\pset{P}'$ (and meanwhile remove it) from $\Phi$, compute the reduced lex \grobner basis $\pset{G}'$ of $\bases{\pset{P}'}$, extract the W-characteristic set of $\bases{\pset{P}'}$ from $\pset{G}'$, and split $\pset{G}'$ when necessary. This process is repeated until $\Phi$ becomes empty.

The method for characteristic decomposition, whose main steps are outlined above, is described formally as Algorithm~1.

~\\
{\bf Algorithm 1 $\Psi = \algnor(\pset{F})$}. Given a finite, nonempty set $\pset{F}$ of nonzero polynomials in $\kx$, this algorithm computes a characteristic decomposition $\Psi$ of $\pset{F}$, or the empty set meaning that $\zero(\pset{F})=\emptyset$.

\begin{itemize}[noitemsep,topsep=0pt]
\item[C1.] Set $\Psi = \emptyset$ and $\Phi = \{\pset{F}\}$.
\item[C2.] Repeat the following steps until $\Phi = \emptyset$:
  \begin{itemize}[noitemsep,topsep=0pt]
  \item[C2.1.] Pick $\pset{P}\in \Phi$ and remove it from $\Phi$.
  \item[C2.2.] Compute the reduced lex \grobner basis $\pset{G}$ of $\bases{\pset{P}}$.
  \item[C2.3.] If $\pset{G} = \{1\} $, then go to C2; otherwise:
    \begin{itemize}[noitemsep,topsep=0pt]
    \item[C2.3.1.] Extract the W-characteristic set $\pset{C}$ from $\pset{G}$.
    \item[C2.3.2.] If $\pset{C}$ is normal (case 1), then reset $\Psi$ with $\Psi \cup \{(\pset{G},\pset{C})\}$, and $\Phi$ with
      \begin{equation}
        \label{eq:case1}
\Phi \cup \{\pset{G} \cup \{\ini(C)\}\mid \ini(C)\not\in \fk, C\in \pset{C}\},
      \end{equation}
and go to C2.
    \item[C2.3.3.] Pick the smallest polynomial $C$ in $\pset{C}$ such that $[T\in\pset{C}\mid \lv(T)\leq\lv(C)]$ as a triangular set is abnormal.
    \item[C2.3.4.] Let $I = \ini(C)$ and $y = \lv(I)$.
    \item[C2.3.5.] Pick the polynomial $C^*$ in $\pset{C}$ such that $\lv(C^*) = y$.
    \item[C2.3.6.] If $I$ is not R-reduced with respect to $C^*$ (case 2.1), then reset $\Phi$ with
      \begin{equation}
        \label{eq:case2}
        \Phi \cup \{\pset{G}\cup \{\ini(T)\}\mid\lv(T)\leq y, T\in\pset{C}\} \cup \{\pset{G}\cup \{I\}\},
      \end{equation}
and go to C2.
    \item[C2.3.7.] If $\prem(\ini(Q), [T\in\pset{C}\mid \lv(T)<y]) = 0$ (case 2.2.1), then reset $\Phi$ with
      \begin{equation}
        \label{eq:case3}
\Phi \cup \{\pset{G}\cup \{\ini(T)\}\mid\lv(T)<y, T\in\pset{C}\} \cup \{\pset{G}\cup \{\ini(I)\}\},
      \end{equation}
and go to C2.
    \item[C2.3.8.] Reset $\Phi$ with
      \begin{equation}
        \label{eq:case4}
        \begin{split}
          \Phi \,\cup\,\{ &\pset{G}\cup \{\ini(T)\mid \lv(T)< y, T\in\pset{C}\}\}\, \cup\\
\{\pset{G}\,\cup\, & \{\prem(Q, [T\in\pset{C}\mid\,\lv(T)<y])\}, \pset{G}\cup \{I\}\}
        \end{split}
      \end{equation}
(case 2.2.2) and go to C2.
  \end{itemize}
\end{itemize}
\item[C3.] Output $\Psi$.
\end{itemize}

\subsection{Correctness and termination}
\label{sec:correct}

\begin{theorem}
  Algorithm~1 terminates in a finite number of steps with correct output.
\end{theorem}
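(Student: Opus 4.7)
The plan is to prove correctness and termination separately, the former resting on the pseudo-division identities of Theorem~\ref{thm:divisible} and the latter on the minimality of the reduced lex \grobner basis combined with Noetherianity of $\kx$. For correctness, I maintain throughout the main loop the invariant
\[
\zero(\pset{F}) \;=\; \bigcup_{(\pset{G},\pset{C})\in\Psi}\zero(\pset{C}/\ini(\pset{C})) \;\cup\; \bigcup_{\pset{P}\in\Phi}\zero(\pset{P}),
\]
which clearly holds at initialization. To see it is preserved by one iteration, I perform a case analysis on Steps C2.3.2, C2.3.6, C2.3.7, C2.3.8. In Step C2.3.2 the normality of $\pset{C}$ together with Proposition~\ref{prop:zero}(c) gives $\zero(\pset{P})\setminus\bigcup_{i}\zero(\ini(C_i))=\zero(\pset{C}/\ini(\pset{C}))$, and the zeros on which some $\ini(C_i)$ vanishes are absorbed by the newly adjoined sets $\pset{G}\cup\{\ini(C_i)\}$. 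In Steps C2.3.6--C2.3.8 the pseudo-remainder identities of Theorem~\ref{thm:divisible} place a product of initials (and, in Steps C2.3.6 and C2.3.8, one further polynomial) in $\bases{C_1,\ldots,C_l}\subseteq\bases{\pset{G}}$, so at every point of $\zero(\pset{G})$ at least one factor vanishes, accounting exactly for the stated splits. When $\Phi$ finally becomes empty the invariant reduces to $\zero(\pset{F})=\bigcup_{i}\zero(\pset{C}_i/\ini(\pset{C}_i))$; the two remaining equalities in \eqref{eq:NormalDec} then follow from the inclusions $\bases{\pset{C}_i}\subseteq\bases{\pset{G}_i}\subseteq\sat(\pset{C}_i)$ of Proposition~\ref{prop:zero}(b), together with the monotonicity $\bases{\pset{F}}\subseteq\bases{\pset{G}_i}$ holding along every branch. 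Finally, each pair pushed onto $\Psi$ is a characteristic pair by construction, since $\pset{G}$ is the reduced lex \grobner basis computed in Step C2.2, $\pset{C}$ is its W-characteristic set, and the push is gated by the normality check at Step C2.3.2.

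For termination, I view the algorithm as generating a rooted, finitely branching tree whose root is $\pset{F}$ and whose children of a node $\pset{P}$ are the finitely many sets adjoined to $\Phi$ after $\pset{P}$ is processed. The central claim is that every child $\pset{P}'=\pset{G}\cup\{X\}$ satisfies $\bases{\pset{P}'}\supsetneq\bases{\pset{G}}$, i.e.\ that the polynomial $X$ being added is not already in $\bases{\pset{G}}$. The underlying non-membership lemma is that \emph{for any reduced lex \grobner basis $\pset{G}\neq\{1\}$ and any $C\in\pset{G}$, $\ini(C)\notin\bases{\pset{G}}$}, and I prove it by contradiction: since $\lv(\ini(C))<\lv(C)$, any standard representation of $\ini(C)$ modulo $\pset{G}$ only involves elements of $\pset{G}\setminus\{C\}$ of leading variable strictly less than $\lv(C)$, and then $R:=C-\ini(C)\,\lv(C)^{\deg(C,\lv(C))}$ is simultaneously B-reduced modulo $\pset{G}$ and a member of $\bases{\pset{G}}$, forcing $R=0$ and hence $C\in\bases{\pset{G}\setminus\{C\}}$, contradicting the minimality of $\pset{G}$. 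A similar but more delicate argument is needed to exclude $\ini(I_{k+1})$ and $\prem(\pquo(C_l,I_{k+1}),\pset{C}_{l-1})$ from $\bases{\pset{G}}$ in Steps C2.3.7 and C2.3.8, where one must additionally invoke the pseudo-division identities of Theorem~\ref{thm:divisible}(b). With strict ideal growth established in every splitting step, each branch of the tree corresponds to a strictly ascending chain of ideals in $\kx$; Noetherianity therefore rules out infinite branches, and since each node has finitely many children, K\"onig's lemma forces the tree to be finite, so the main loop terminates.

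The main obstacle I anticipate is precisely this last non-membership verification for the more exotic polynomials $\ini(I_{k+1})$ and $\prem(\pquo(C_l,I_{k+1}),\pset{C}_{l-1})$ in Cases 2.2.1 and 2.2.2, where one must combine the B-reducedness of $I_{k+1}$ modulo $\pset{G}$ with the specific pseudo-division identities that define these adjoined polynomials in order to rule out cancellation. Once that point is settled, the remainder is a routine application of Noetherianity and K\"onig's lemma.
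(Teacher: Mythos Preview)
Your plan is correct and matches the paper's approach: termination via strict ideal growth along a finitely branching tree plus the Ascending Chain Condition, and correctness via a four-case verification that the zero-set invariant is preserved at each split. One refinement on your anticipated obstacle: to exclude $\prem(Q,\pset{C}_{l-1})$ from $\bases{\pset{G}}$ in Case~2.2.2 the paper does \emph{not} appeal to Theorem~\ref{thm:divisible}(b) (that identity is used for \emph{correctness} of the split, not termination); instead it uses the normality/regularity of $\pset{C}_{l-1}$ (the first assertion of Theorem~\ref{thm:divisible}) together with Lemma~\ref{lem:prem} and the branch hypothesis $\prem(\ini(Q),\pset{C}_{l-1})\neq0$ to get $\prem(Q,\pset{C}_{l-1})\neq0$, and then a simple degree count in each $\lv(C_1),\ldots,\lv(C_l)$---combined with the minimality defining the W-characteristic set---shows this remainder is already B-reduced modulo $\pset{G}$. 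Your proof of the non-membership lemma for $\ini(C)$ also works but is more circuitous than necessary: the paper observes directly that every term of $\ini(C)$, multiplied by $\lv(C)^{\deg(C,\lv(C))}$, is a term of $C$ and hence not divisible by any $\lt(G')$ with $G'\neq C$, so $\ini(C)$ is itself B-reduced modulo $\pset{G}$.
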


\begin{proof}
  ({\em Termination}) The process of splitting in Algorithm~1 can be viewed as building up a tree from its root as the input set $\pset{F}$. Every time a polynomial set $\pset{P}$ is picked from $\Phi$, splitting occurs according to one of the four cases, treated in steps C2.3.2, C2.3.6, C2.3.7, and C2.3.8 of Algorithm~1, as long as the reduced lex \grobner basis of $\bases{\pset{P}}$ is not $\{1\}$ (for otherwise $\pset{P}$ has no zero). Suppose that the split polynomial sets $\pset{G}_1, \ldots, \pset{G}_s$ are adjoined to $\Phi$. In the sense of building up the tree, this means that the child nodes of $\pset{P}$ are $\pset{G}_1, \ldots, \pset{G}_s$.

To prove the termination of Algorithm~1, we need to show that each path in the tree is of finite length. Thus by the Ascending Chain Condition (see, e.g., \cite[Chap.\ 2, Thm.\ 7]{CLO1997I}), it suffices to show that for all the four cases of splitting, each polynomial set $\pset{G}' = \pset{G} \cup \{H\}$ adjoined to $\Phi$ for some $H$ generates an ideal $\bases{\pset{G'}}$ that is strictly greater than $\bases{\pset{G}}$, or equivalently $H \not \in \bases{\pset{G}}$.

Let the W-characteristic set $\pset{C}$ in step C2.3.1 be written as $[C_1,\ldots,C_r]$ with $I_i=\ini(C_i)$ and $\pset{C}_{i}=[C_1,\ldots,C_i]$ for $1\leq i\leq r$. Then $C$, $I$, and $C^*$ in steps C2.3.3--C2.3.5 correspond to $C_{k+1}, I_{k+1}$, and $C_l$ respectively for some integers $l\leq k$ as stated in Theorem~\ref{thm:divisible}.

Let $H \not \in \fk$ be the initial of some $C\in \pset{C}$ as in \eqref{eq:case1} in step C2.3.2. We claim that $H$ is B-reduced with respect to $\pset{G}$; for otherwise $C$ will be reducible by some polynomial in $\pset{G}\setminus \{C\}$, which conflicts with the fact that $\pset{G}$ is the reduced lex \grobner basis of $\bases{\pset{P}}$. Therefore, $H \not \in \bases{\pset{G}}$ for all $C\in \pset{C}$ in \eqref{eq:case1}. With similar arguments, one can show that $H \not \in \bases{\pset{G}}$ as well if $H \not \in \fk$ is the initial of some polynomial $T\in \pset{C}$ as in \eqref{eq:case2}, \eqref{eq:case3}, and \eqref{eq:case4}.

To complete the proof of termination, it remains to show that $H=\prem(Q, \pset{C}_{l-1})\not \in \bases{\pset{G}}$, where $Q=\pquo(C_l, I_{k+1})$.  In step C2.3.8, the conditions $\lv(Q) >\lv(C_{l-1})$ and $\prem(\ini(Q),$ $\pset{C}_{l-1}) \!\neq\! 0$ hold. By Theorem~\ref{thm:divisible}, $\pset{C}_{l-1}$ is normal and thus regular; then by Lemma~\ref{lem:prem}, $\prem(Q, \pset{C}_{l-1}) \neq 0$ and
\begin{equation}
  \label{eq:split4}
\deg(\prem(Q, \pset{C}_{l-1}), \lv(C_i)) < \deg(C_i, \lv(C_i))
\end{equation}
for $i=1, \ldots, l-1$. Furthermore, as $Q = \pquo(C_l, I_{k+1})$ and $\lv(C_l)$ also appears in $I_{k+1}$ (by Theorem~\ref{thm:divisible}), we have $\deg(Q, \lv(C_l)) < \deg(C_l, \lv(C_l))$ and thus
\begin{equation}
  \label{eq:split4-2}
\deg(\prem(Q, \pset{C}_{l-1}), \lv(C_l)) < \deg(C_l, \lv(C_l)).
\end{equation}
Since $\pset{C}$ is the W-characteristic set of $\pset{G}$, the relations \eqref{eq:split4} and \eqref{eq:split4-2} imply that $\prem(Q, \pset{C}_{l-1})$ is B-reduced with respect to $\pset{G}$ and thus $\prem(Q, \pset{C}_{l-1}) \not \in \bases{\pset{G}}$.

({\em Correctness}) When $\zero(\pset{F}) = \emptyset$, $\pset{G} = \{1\}$ in step C2.3 and $\Psi = \emptyset$ is returned. Therefore, to prove the correctness of Algorithm~1, we need to show that when $\zero(\pset{F}) \neq \emptyset$, $\Psi$ is a characteristic decomposition of $\pset{F}$, namely all the pairs $(\pset{G}, \pset{C})\in \Psi$ are characteristic pairs and the zero relation \eqref{eq:NormalDec} holds.

It is clear that each $(\pset{G}, \pset{C})\in \Psi$ is a characteristic pair, for only in step C2.3.2 is the output set $\Psi$ adjoined with a new pair $(\pset{G}, \pset{C})$, where $\pset{G}$ is a reduced lex \grobner basis and $\pset{C}$ is its normal W-characteristic set. We first prove that $\zero(\pset{F}) = \bigcup_{(\pset{G}, \pset{C}) \in \Psi}\zero(\pset{C} / \ini(\pset{C}))$ by considering all the four cases of splitting. For this purpose, let $\pset{C} = [C_1, \ldots, C_r]$ be the W-characteristic set, $I_i = \ini(C_i)$ $(1 \leq i \leq r)$, and $L = \ini(I_{k+1})$.

Case 1 (step C2.3.2): the W-characteristic set $\pset{C}$ is normal. In this case, let $J = I_1\cdots I_r$. Then $\zero(\pset{P}) = (\zero(\pset{P})\setminus \zero(J)) \cup \zero(\pset{P} \cup \{J\})$. It follows from the zero relation \eqref{eq:zero} that $\zero(\pset{P})\setminus \zero(J) = \zero(\pset{C} / \ini(\pset{C}))$. Moreover, as $J = 0$ implies that $I_1=0$, or $I_2 = 0, \ldots$, or $I_r = 0$, $$\zero(\pset{P} \cup \{J\}) = \bigcup_{i=1}^r \zero(\pset{P} \cup \{I_i\}) = \bigcup_{i=1}^r\zero(\pset{G} \cup \{I_i\}).$$ Therefore,
\begin{equation}
  \label{eq:zero1}
\zero(\pset{P}) = \zero(\pset{C} / \ini(\pset{C})) \cup \bigcup \limits _{i=1}^r\zero(\pset{G} \cup \{I_i\}).
\end{equation}

Case 2.1 (step C2.3.6): $\pset{C}$ is abnormal and $\deg(I_{k+1}, \lv(I_{k+1})) \geq \deg(C_l, \lv(I_{k+1}))$. By Theorem~\ref{thm:divisible}(a) we have $\prem(I_{k+1}, \pset{C}_l) = 0$, and thus there exist $Q_1, \ldots, Q_l \in \kx$ and $q_1, \ldots, q_l\in \znum_{\geq 0}$ such that
$I_1^{q_1}\cdots I_l^{q_l} I_{k+1}$ $= Q_1 C_1 + \cdots + Q_l C_l$. This means that $I_1^{q_1}\cdots I_l^{q_l} I_{k+1} \in \bases{C_1, \ldots, C_l} \subseteq \bases{\pset{P}} = \bases{\pset{G}}$, so
\begin{equation}\label{eq:zero2}
  \zero(\pset{P}) = \zero(\pset{G}) =\zero(\pset{G} \cup \{I_1^{q_1}\cdots I_l^{q_l} I_{k+1}\}) = \bigcup \limits _{i=1}^l\zero(\pset{G} \cup \{I_i\}) \cup \zero(\pset{G} \cup \{I_{k+1}\}).
\end{equation}

Case 2.2.1 (step C2.3.7): $\pset{C}$ is abnormal, $\deg(I_{k+1}, \lv(I_{k+1})) < \deg(C_l, \lv(I_{k+1}))$, and $\prem(\ini(Q), $ $\pset{C}_{l-1}) = 0$. By the formula $L^q C_l = Q I_{k+1} + R$ of pseudo-division of $C_l$ with respect to $I_{k+1}$, where $q\in \znum_{\geq 0}$ and $Q, R \in \kx$, we have $\ini(Q) = L^{q-1} I_l$. Since
\begin{equation*}
\prem(\ini(Q), \pset{C}_{l-1}) = \prem(L^{q-1} I_l, \pset{C}_{l-1}) =  I_l\prem(L^{q-1}, \pset{C}_{l-1}) = 0,
\end{equation*}
$\prem(L^{q-1}, \pset{C}_{l-1}) = 0$ and thus
\begin{equation}\label{eq:zero3}
\zero(\pset{P}) = \bigcup \limits _{i=1}^{l-1}\zero(\pset{G} \cup \{I_i\}) \cup \zero(\pset{G} \cup \{L\})
\end{equation}
follows from arguments similar to those in case 2.1.

Case 2.2.2 (step C2.3.8): $\pset{C}$ is abnormal, $\deg(I_{k+1},\lv(I_{k+1})) < \deg(C_l,  \lv(I_{k+1}))$, and $\prem(\ini(Q), $ $\pset{C}_{l-1})\neq 0$. By Theorem~\ref{thm:divisible}(b), we have
\begin{equation*}
  \begin{split}
\prem(C_l, [C_1, \ldots, C_{l-1}, I_{k+1}]) &=\prem(\prem(C_l, I_{k+1}), \pset{C}_{l-1}) \\
&= \prem(L^q C_l - Q I_{k+1}, \pset{C}_{l-1}) = 0,
  \end{split}
\end{equation*}
where $Q$ and $q$ are as in case 3. Then there exist $q_1, \ldots, q_{l-1}\in \znum_{\geq 0}$ and $Q_1, \ldots, Q_{l-1} \in \kx$ such that $I_1^{q_1} \cdots I_{l-1}^{q_{l-1}}(L^q C_l - Q I_{k+1}) = \sum_{i=1}^{l-1} Q_i C_i$. It follows that
\begin{equation}
  \label{eq:proof-split4}
-I_1^{q_1} \cdots I_{l-1}^{q_{l-1}} I_{k+1} Q= \sum \limits _{i=1}^l Q_i C_i
\end{equation}
for $Q_l = -I_1^{q_1} \cdots I_{l-1}^{q_{l-1}}L^q$. Let the formula of pseudo-division of $Q$ with respect to $\pset{C}_{l-1}$ be
\begin{equation}
  \label{eq:proof-split4-2}
I_1^{\bar{q}_1} \cdots I_{l-1}^{\bar{q}_{l-1}} Q = \sum \limits _{i=1}^{l-1}\bar{Q}_iC_i + \prem(Q, \pset{C}_{l-1})
\end{equation}
for some $\bar{q}_1, \ldots, \bar{q}_{l-1}\in \znum_{\geq 0}$ and $\bar{Q}_1$, $\ldots, \bar{Q}_{l-1} \in \kx$. Then one can find $\hat{q}_1, \ldots, \hat{q}_{l-1}\in \znum_{\geq 0}$ and $\hat{Q}_1$, $\ldots, \hat{Q}_l \in \kx$ such that
$$I_1^{\hat{q}_1} \cdots I_{l-1}^{\hat{q}_{l-1}}I_{k+1} \prem(Q, \pset{C}_{l-1})  = \sum \limits _{i=1}^l \hat{Q}_i C_i$$
holds (in view of the formulas \eqref{eq:proof-split4} and \eqref{eq:proof-split4-2}). Therefore,
\begin{equation}\label{eq:zero4}
\zero(\pset{P}) = \bigcup \limits _{i=1}^{l-1}\zero(\pset{G} \cup \{I_i\}) \cup \zero(\pset{G} \cup \{I_{k+1}\}) \cup \zero(\pset{G} \cup \{\prem(Q, \pset{C}_{l-1})\}).
\end{equation}

The zero relations in \eqref{eq:zero1}, \eqref{eq:zero2}, \eqref{eq:zero3}, and \eqref{eq:zero4} show that for each polynomial set $\pset{P}\in \Phi$, any zero of $\pset{P}$ is either in $\zero(\pset{C}/\ini(\pset{C}))$ if the W-characteristic set $\pset{C}$ of $\bases{\pset{P}}$ is normal in case 1 or in $\zero(\pset{P}')$ for another polynomial set $\pset{P}'$ adjoined to $\Phi$ for later computation in the other cases. This proves the zero relation $\zero(\pset{F}) = \bigcup_{(\pset{G}, \pset{C}) \in \Psi}\zero(\pset{C} / \ini(\pset{C}))$; for Algorithm~1 terminates when $\Phi$ becomes empty.

On one hand, by the zero relation \eqref{eq:zero}, we have
$$\zero(\pset{F}) = \bigcup_{(\pset{G}, \pset{C}) \in \Psi}\zero(\pset{C} / \ini(\pset{C}))  \subseteq \bigcup_{(\pset{G}, \pset{C})\in\Psi} \zero(\pset{G}).$$
On the other hand, $\zero(\pset{G}) \subseteq \zero(\pset{F})$ holds for all $(\pset{G}, \pset{C})\in \Psi$ according to the zero relations \eqref{eq:zero1}, \eqref{eq:zero2}, \eqref{eq:zero3}, and \eqref{eq:zero4} for the four cases of splitting. This proves the equality $\zero(\pset{F}) = \bigcup_{(\pset{G}, \pset{C})\in\Psi} \zero(\pset{G})$. For each $(\pset{G}, \pset{C}) \in \Psi$, we have $\zero(\pset{C}/\ini(\pset{C})) \subseteq \zero(\sat(C)) \subseteq \zero(G)$, and thus
$$\zero(\pset{F}) = \bigcup_{(\pset{G}, \pset{C})\in\Psi} \zero(\pset{C}/\ini(\pset{C})) \subseteq \bigcup_{(\pset{G}, \pset{C})\in\Psi} \zero(\sat(\pset{C})) \subseteq \bigcup_{(\pset{G}, \pset{C})\in\Psi} \zero(\pset{G}) = \zero(\pset{F}).$$
This completes the proof of the zero relation \eqref{eq:NormalDec}.
\end{proof}\vspace{-5mm}

\section{Example and experiments}\label{sec:ex-ex}

\subsection{Example for characteristic decomposition}

Let $\pset{F} = \{ay-x-1, -xyz+az, xz^2-az+y\} \subseteq \fk[a, x, y, z]$ with $a<x<y<z$. The procedure to compute a characteristic decomposition of $\pset{F}$ using Algorithm~1 is shown in Table~\ref{tab:charpair}, where $\pset{G}_i$ is the computed reduced lex \grobner basis and $\pset{C}_i$ is its W-characteristic set in the $i$th loop.

\begin{table*}[!htp]
\centering
\caption{Illustration for Algorithm~1}\label{tab:charpair}

\medskip
\resizebox{\textwidth}{!}{
\begin{tabular}{c|ccccccc}
$i$ & $\Phi$ & $\pset{P}$ & $\pset{G}_i$ & $\pset{C}_i$ & Normal & Case &$\Psi$\\ ~\\[-12pt]
\hline
1 & $\{\pset{F}\}$ & $\pset{F}$ &
\begin{tabular}{l}
$\{G_1,G_2,G_3,G_4$,\\
~\,$zG_5,zG_6,G_7\}$
\end{tabular}
 & $[G_1,G_2,zG_5]$ & No & 2.2.2 & $\emptyset$  \\
\hline\vspace{-1.5mm}
~\\
2 & $\{\pset{F}_1,\pset{F}_2\}$ & $\pset{F}_1$ & $\{x+1,y,az,z^2\}$& $[x+1,y,az]$ & Yes & 1 &$\{(\pset{G}_2,\pset{C}_2)\}$ \\[2.5mm]
\hline
 3 & $\{\pset{F}_2,\pset{F}_3\}$ & $\pset{F}_2$&
 \begin{tabular}{c}
$\{G_2,G_4,G_5,$\\
$zG_6,G_7\}$
 \end{tabular}
& $[G_2, G_5, zG_6]$ & No & 2.1 &$\{(\pset{G}_2,\pset{C}_2)\}$\\
\hline\vspace{-1.5mm}
~\\
4 & $\{\pset{F}_3,\pset{F}_4,\pset{F}_5\}$ & $\pset{F}_3$ & $\{a,x+1,y,z\}$& $[a,x+1,y,z]$ & Yes & 1 &
\begin{tabular}{c}
$\{(\pset{G}_2,\pset{C}_2), (\pset{G}_4,\pset{C}_4)\}$
\end{tabular}\\[2.5mm]
\hline
5 & $\{\pset{F}_4,\pset{F}_5\}$ & $\pset{F}_4$ & $\{G_5, G_2,G_6,G_7\}$ & $[G_5,G_2,G_7]$ & Yes & 1 & \begin{tabular}{c}
$\{(\pset{G}_2,\pset{C}_2)$, $(\pset{G}_4,\pset{C}_4)$\\
$(\pset{G}_5, \pset{C}_5)\}$
\end{tabular}\\
\hline
6 & $\{\pset{F}_5,\pset{F}_6\}$ & $\pset{F}_5$ &
\begin{tabular}{c}$\{a, x+1, y^2$,\\$ yz, z^2-y\}$\end{tabular}
& $[a,x+1,y^2,yz]$ & No & 2.2.2 & \begin{tabular}{c}
$(\{\pset{G}_2,\pset{C}_2)$, $(\pset{G}_4,\pset{C}_4)$\\
$(\pset{G}_5, \pset{C}_5)\}$
\end{tabular}\\
\hline
7 & $\{\pset{F}_6\}$ & $\pset{F}_6$ & $\{a,x+1,y,z^2\}$ & $[a,x+1,y,z^2]$ & Yes & 1 & \begin{tabular}{l}
\{$(\pset{G}_2,\pset{C}_2)$, $(\pset{G}_4,\pset{C}_4)$\\
$~\,(\pset{G}_5, \pset{C}_5)$, $(\pset{G}_7, \pset{C}_7)$\}
\end{tabular}\\
\end{tabular}}
\end{table*}

The polynomial sets $\pset{F}_i$ and polynomials $G_j$ in Table~\ref{tab:charpair} are
listed below:
\begin{equation*}
  \begin{array}{ll}\smallskip
     \pset{F}_1 = \{x+1, y, ay, az, (y+a)z, G_7\}, & \pset{F}_2 = \{G_4,aG_2, zG_6, G_2, G_7, G_5\},\\ \smallskip
      \pset{F}_3 = \{a,x+1,y,z\}, &  \pset{F}_4 = \{G_5, G_2, G_6, G_7\},\\ \medskip
      \pset{F}_5 = \{a, x+1, x^2+x, G_4, xyz, G_7\}, &  \pset{F}_6 = \{a, x+1, x^2+x, xy, G_7\}; \\ \smallskip
      G_1 = x^3+2x^2+(1-a^2)x-a^2, &  G_2= ay-x-1,\\ \smallskip
      G_3=x^2y+xy-ax-a, & G_4=xy^2-x-1, \\ \smallskip
      G_5=x^2+x-a^2, & G_6=xy-a, \\
      G_7=z^2-yz+y^3-y.
  \end{array}
\end{equation*}
The output characteristic decomposition of $\pset{P}$, as shown in Table~\ref{tab:charpair}, is $\{(\pset{G}_2, \pset{C}_2), (\pset{G}_4, \pset{C}_4)$, $(\pset{G}_5, \pset{C}_5), (\pset{G}_7, \pset{C}_7)\}$.

\subsection{Experimental results}

Algorithm~1 has been implemented in {\sc Maple} 17 based on functions available in the {\sf FGb} and {\sc Maple}'s built-in packages for \grobner basis computation. The implementation will be included in the upcoming new version of the Epsilon package for triangular decomposition \cite{W2002e}.

In Theorem~\ref{thm:divisible} there is an assumption on the variable order (i.e., all the parameters of a W-characteristic set are ordered before the other variables). This assumption always holds in the zero-dimensional case. Our experiments show that in the positive-dimensional case there are about one fourth of the test examples for which it happens that the assumption does not hold. The assumption is made to ensure that the (pseudo-) divisibility relationships in Theorem~\ref{thm:divisible} occur. In the case where the assumption does not hold, we can make such relationships to occur by changing the variable order heuristically. In fact, using the heuristics we were able to obtain necessary (pseudo-) divisibility relationships to complete the characteristic decomposition for all the test examples.

Let us emphasize that Algorithm~1 decomposes any polynomial set into characteristic pairs of reduced lex \grobner bases and their W-characteristic sets at one stroke. The two kinds of objects resulted from the combined decomposition, each having its own structures and properties, are interconnected. This makes our algorithm distinct from other existing ones for triangular decomposition.
To observe the computational performance of the algorithm, in comparison with algorithms for indirect normal decomposition (that is, first computing a regular decomposition and then normalizing the regular sets in the decomposition), we made some experiments on an Intel(R) Core(TM) Quad CPU at 2.83 GHz with 4.00 GB RAM under Windows 7 Home Basic. Selected results of the experiments are presented in Table~\ref{tab:main}, of which the first 9 are taken from \cite{c07c} and the others are from benchmarks for the {\sf FGb} library. We implemented Algorithm~1 as $\algnor$ in {\sc Maple} for characteristic decomposition and used the functions {\sf Triangularize} (from the {\sf RegularChains} package in {\sc Maple}) and {\sf RegSer} (from the {\sf Epsilon} package for {\sc Maple}) for regular decomposition and the function {\sf normat} (from the {\sf miscel} module of {\sf Epsilon}) for normalization.

 In Table~\ref{tab:main}, ``Source" indicates the label in the above-cited references and ``Dim" denotes the dimension of the ideal in the example. ``Total" under $\algnor$ records the total time (followed by the number of pairs in parenthesis) for characteristic decomposition using Algorithm~1; ``GB" under $\algnor$ records the time for computing all the reduced lex \grobner bases; ``Total" and ``Regular" under {\sf RegSer} and {\sf Triangularize} record the total time for normal decomposition and the time for regular decomposition (followed by the numbers of components in parenthesis) respectively, where normal decompositions are computed from regular decompositions by means of normalization using {\sf normat}. The marks ``lost" and ``$>4000$" in the columns mean that Maple reports ``lost kernel connections'' and that the computation does not terminate within 4000 seconds respectively. 

\begin{table*}[!t]
\centering
\caption{Timings for characteristic/normal decomposition}\label{tab:main}

\medskip
{\footnotesize
\begin{tabular}{c|c|cc|cc|cc}
\multicolumn{2}{c}{ } & \multicolumn{2}{c}{\sf CharDec}& \multicolumn{2}{c}{\sf RegSer}&\multicolumn{2}{c}{\sf Triangularize}\\
\hline
Source & Dim   & Total  & GB   & Total & Regular & Total  & Regular\\
\hline
S5  &4 &0.14(8)  &0.047       &4.182(31)     &0.484(19)    &1.513(9)  &0.124(1)  \\
S7  &1 &0.156(5)  &0.078       &0.251(7)     &0.11(4)    &0.249(5)  &0.109(1)  \\
S8  &2 &0.062(2)  &0.32       &0.062(3)     &0.062(3)    &0.156(2)  &0.141(2)  \\
S9  &2 &0.125(5)  &0.078       &0.483(21)     &0.14(8)    &0.188(6)  &0.094(1)  \\
S10 &3& 0.594(16)  &0.313       &0.438(7)     &0.172(5)    &0.36(3)  &0.235(1)  \\
S13  &3& 0.312(13)  &0.14       &0.171(8)     &0.109(8)    &0.125(2)  &0.094(1)  \\
S14  &2 & 0.531(9)  &0.327       &0.14(6)     &0.109(6)    &0.157(8)  &0.125(8)  \\
S16  &3& 0.640(6)  &0.344       &0.703(7)     &0.609(7)    &4.609(8)  &4.609(8)  \\
S17  &6 & lost   &lost       &lost     &lost    &lost  &lost  \\
nueral    &1 &1.826(15)      &1.514      &$>4000$     &$>4000$    & 0.233(6)       &0.14(5)         \\
F663  & 2 &2.949(6)      &2.326    &1.935(16) &1.202(15)&1.607(6)  &1.045(4)  \\
Dessin2  &0 &27.222(1)     &27.207  &$>4000$     &$>4000$    &$>4000$     &$>4000$      \\
Wang16  &0 &0.203(1)      &0.171          &14.555(1) &0.437(1)&14.086(1) &0.156(1)  \\
filter9 &0 &0.640(1)      &0.593          &$>4000$   &$>4000$ &lost      &lost\\
fabrice24 &0 &436.7(1)    &436.7          &lost      &lost    &lost      &lost\\
uteshev bikker &0 &3.806(1) &3.766        &lost      &lost    &$>4000$   &$>4000$\\
Cyclic6   &0 &2.153(25)      &1.244          &lost     &lost    &$>4000$ &$>4000$  \\
\hline
\end{tabular}}
\end{table*}

The most time-consuming step in $\algnor$ is for the computation of lex \grobner bases, as one can see from Table~\ref{tab:main}. In our implementation, the {\sf FGb} library is first invoked to compute \grobner bases with respect to graded reverse lexicographic term ordering, and the computed \grobner bases are then converted to lex \grobner bases by changing the term ordering using either the FGLM algorithm for the zero-dimensional case \cite{FGLM93E} or the \grobner walk otherwise \cite{CKM97C}. Unfortunately, the built-in implementation of the \grobner walk algorithm in {\sc Maple} is very inefficient and it is the current bottleneck of our implementation. 

~\\
~\\
Finally, we add a few remarks to conclude the paper: we have studied characteristic pairs (that is, pairs of reduced lex \grobner bases and normal triangular sets) and the problem of characteristic decomposition (that is, decomposition of arbitrary polynomial sets into characteristic pairs). We have proved a number of properties about characteristic pairs and characteristic decomposition, and proposed an algorithm with implementation for the decomposition. The algorithm explores the inherent connection between Ritt characteristic sets and lex \grobner bases and involves mainly the computation of lex \grobner bases; normal triangular sets are obtained as by-product almost for free. Associated to a characteristic decomposition $\{(\pset{G}_1, \pset{C}_1), \ldots, (\pset{G}_t, \pset{C}_t)\}$ of a polynomial set $\pset{P}$ are zero decompositions
$$\zero(\pset{P}) = \zero(\pset{G}_1) \cup \cdots \cup \zero(\pset{G}_t) =  \zero(\pset{C}_1/\ini(\pset{C}_t))\cup \cdots \cup \zero(\pset{C}_t/ \ini(\pset{C}_t))$$
and the corresponding radical ideal decompositions
$$\sqrt{\bases{\pset{P}}} = \sqrt{\bases{\pset{G}_1}}\cup \cdots \cup \sqrt{\bases{\pset{G}_t}} = \sqrt{\sat(\pset{C}_1)}\cup \cdots \cup \sqrt{\sat(\pset{C}_t)}.$$
In these decompositions, the reduced lex \grobner bases $\pset{G}_1, \ldots, \pset{G}_t$ and normal triangular sets $\pset{C}_1, \ldots, \pset{C}_t$ are closely linked and well structured polynomial sets whose usefulness has been widely recognized.

{\small
\bibliographystyle{abbrv}
\bibliography{alg4writt}
}

\end{document}